\newtheorem{thm}{Theorem}[section]
\newtheorem{remark}[thm]{Remark}
\newtheorem{definition}[thm]{Definition}
\newtheorem{alg}[thm]{Algorithm}
\newtheorem*{Satz*}{Satz}
\newtheorem{observation}[thm]{Observation}
\newtheorem{Lemma}[thm]{Lemma}
\newtheorem{proposition}[thm]{Proposition}
\newtheorem{Corollary}[thm]{Corollary}
\newtheorem*{algorithm}{Algorithm}
\newtheorem*{Obs}{Observation}
\newtheorem{proposal}[thm]{Proposal}
\newtheorem*{Proposal}{Proposal}
\newtheorem*{theorem}{Theorem}
\newtheorem*{corollary}{Corollary}
\newcommand{\mathset}[1]{{\left\{#1\right\}}}
\newcommand{\absolute}[1]{\left\lvert#1\right\rvert}
\newcommand{\norm}[1]{\left\|#1\right\|}
\DeclareMathOperator{\Spec}{Spec}
\DeclareMathOperator{\Vol}{Vol}
\DeclareMathOperator{\diam}{diam}
\DeclareMathOperator{\rank}{Ran}
\DeclareMathOperator{\dist}{dist}
\title{Approximating Diffusion on Finite 
Multi-Topology Systems Using Ultrametrics}
\author{Patrick Erik Bradley \\
 bradley@kit.edu \\
\'Angel Mor\'an Ledezma \\
angel.ledezma@kit.edu 
\\
Institute of Photogrammetry and Remote Sensing
\\
Karlsruhe Institute of Technology
\\
Englerstr.\ 7
\\
76131 Karlsruhe
\\
Germany
}
\date{\today}
\begin{document}

\maketitle

\begin{abstract}
Motivated by multi-topology building and city model data, first a lossless representation of multiple  $T_0$-topologies on a given finite set by a vertex-edge-weighted graph is given, and the subdominant ultrametric of the associated weighted graph distance matrix is proposed as an index structure for these data. This is applied in a heuristic parallel topological sort algorithm for edge-weighted directed acyclic graphs.
Such structured data are of interest in simulation of processes like heat flows on building or city models on distributed processors.
With this in view, the bulk of this article calculates the spectra of certain unbounded self-adjoint $p$-adic Laplacian operators on the $L^2$-spaces of a compact open subdomain of the $p$-adic number field associated with a finite graph $G$ with respect to the restricted Haar measure.
as well as to a Radon measure coming from an ultrametric on the vertices of $G$ with the help of  $p$-adic polynomial interpolation. 
In the end, error bounds are given for the solutions of the corresponding heat equations by  finite approximations of such operators.
\end{abstract}

\emph{Keywords:}
topology,
$p$-adic numbers,
graphs,
diffusion,
finite systems












\section{Introduction}

A substantial part of data analysis deals with given finite data, viewed as nodes, and varying pairwise interaction or connectivity, viewed as edges, plus higher-order interaction or connectivity, viewed as simplices. The different graph or simplicial structures on the data help in  their understanding by exploring underlying topological and geometrical properties. Diffusion on such structures can often be used as a further tool for understanding them. The diffusion on such structures itself is also often not well understood, and this gives rise to its analytical study.
\newline

In order to facilitate a data exploration as described above, it is helpful to have an access method for the structered data, and in many cases to distribute the processing onto many different computational units. For both tasks, a hierarchical re-structuring of the data can lead to both, fast accessing and fast processing. This is one of the goals of the current research project \emph{Distributed Simulation of Processes in Buildings and City Models} in order to facilitate the simulation of processes like heat flows in city models at varying levels of resolution. In particular, if the resolution is high, then substitute models become necessary, as otherwise computing huge Laplacian matrices becomes unfeasible.
Hierarchical substitute models are natural candidates for efficient computations, as accessing trees is known to be possible in logarithmic time w.r.t.\ the number of leaf nodes. Since mathematics provides through the field $\mathds{Q}_p$ of $p$-adic numbers a natural hierarchical number system (aka ultrametric), it seemed natural to use these already 
for indexing point cloud data, i.e.\ in the case of dimension zero \cite{ScaleHilbert}.
\newline

In any case,
it becomes in the above perspective natural to resort to $p$-adic diffusion which was initiated via formulating the heat equation with the Vladimirov-Taibleson operator on $\mathds{Q}_p$ or other non-archimedean local fields \cite{Taibleson1975,VVZ1994}.
This $p$-adic Laplacian being a pseudodifferential operator necessitates the domain to be a locally compact abelian group. Another example is the group $\mathds{Z}_p$ of $p$-adic integers \cite{Kochubei2018,PW2024}. 
However, if the $p$-adic space itself does not have a group structure, then the Fourier transform is no longer available to construct a Laplacian. Its integral operator description was adapted by Z\'u\~niga-Galindo to the study of Turing patterns on networks, leading to diffusion operators on general compact open subdomains of $\mathds{Q}_p$ \cite{ZunigaNetworks}. This approach lends to the idea of asking for diffusion operators on more general $p$-adic manifolds. In the case of Mumford curves (which can be viewed as compact closed $p$-adic algebraic manifolds), this is ongoing work containing operators which allow to extract topological information from these spaces \cite{brad_HeatMumf,brad_thetaDiffusionTateCurve,HearingGenusMumf,SchottkyInvariantDiffusion}, or also from finite graphs \cite{BL_shapes_p}.  
\newline

Applications of ultrametric analysis outside of mathematics or mathematical physics include, but do not restrict to, image processing or microbiology
\cite{Angulo2019,GL2023}.
An important feature of $p$-adic or general ultrametric Laplacians is that a huge part of their spectra correspond to Haar-like wavelets of various types \cite{Kozyrev2002,GL2023}. Among $p$-adic analysts, the Kozyrev wavelets are the most well-known ones.
\newline

The goal of this contribution is to define and study $p$-adically extended Laplace operators for finite graphs, based on 
\begin{itemize}
\item weighted graph adjacency
\item weighted graph distance
\item the subdominant ultrametric of the weighted graph distance
\end{itemize}
and to investigate finite approximations of the solutions to the  heat equation corresponding to the above Laplacian operators. The envisaged input data are not merely a graph, but representations of multiple $T_0$-topologies on a finite point set (cf.\ \cite{Alexandrov1937}), as this is how data are assembled in the  research project mentioned above. The representation is given in the form of a weighted graph, and the subdominant ultrametric of the corresponding graph distance function provides  for an index structure in order to efficiently access the topologica data. The vertices of such a graph are now, as in Z\'uniga's work \cite{ZunigaNetworks}, associated with disjoint subdiscs of $\mathds{Q}_p$, and $p$-adic polynomial interpolation then allows to construct a Radon measure on this compact $p$-adic submanifold. It can be  viewed as coming from a 
regular differential $1$-form on an open piece of a Mumford curve. The natural question arises, whether and
when different datasets could be viewed as samples from open compact patches
which form a covering of a Mumford curve. 
This is accompanied by a construction  of unbounded $p$-adic diffusion operators of the three mentioned types, together with a study of their spectra, Cauchy problem and  heat kernels. 
This complements the bounded Laplacians provided by Z\'u\~niga-Galindo \cite{ZunigaNetworks}.
The last contribution consists of error estimates for the solutions of the different types of heat equations. An important concept on the way is given by \emph{ultrametric wavelets}, a generalisation of Kozyrev wavelets to arbitrary ultrametric spaces, described in \cite{XK2005}, whose corresponding trees are locally finite. Having been inspired by the Haar-like wavelets from \cite{GNC2010}, these ultrametric wavelets are presented here differently than in \cite{XK2005}   by  using characters of finite cyclic groups, while not making use of their special ultrametric distance and measure. 
\newline

The results of this study can be grouped as follows:

\paragraph{1. Indexing multi-topologies.}
Motivated by the building and city model data related to the mentioned project, and which consists of points having multiple topologies, because any binary partial ordering defines a $T_0$-topologie (and vice versa) by \cite{Alexandrov1937}, an index structure can be derived by first representing the multiple topologies in a single graph, and then using the subdominant ultrametric of the graph distance. Multiple topologies on a point set arise from various different acyclic binary relations on the same set, like e.g.\ boundary and aggregation relations, or time-dependent relationships.

\begin{Obs}[\ref{losslessGraphRep}]
Multiple topologies on a finite point set have a lossless representation as a vertex-weighted graph.
\end{Obs}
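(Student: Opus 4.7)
The plan is to invoke Alexandrov's correspondence: on a finite set $X$ a $T_0$-topology is the same datum as a partial order on $X$ (the specialization order), and any partial order on a finite set is determined by its cover relation, that is, by its Hasse diagram. Given $k$ topologies $\tau_1,\dots,\tau_k$ on $X$, I therefore have $k$ partial orders $\leq_1,\dots,\leq_k$ and hence $k$ Hasse diagrams $H_1,\dots,H_k$ on the common vertex set $X$.

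The first step is to build the ambient graph $G$ by taking $V(G):=X$ and letting the edge set of $G$ be the union $\bigcup_i E(H_i)$ of the Hasse edges. Two different Hasse diagrams may share an edge, so to avoid losing information I attach to each edge $e \in E(G)$ the subset $\omega(e):=\{i : e \in E(H_i)\} \subseteq \{1,\dots,k\}$; encoding this subset by its characteristic vector in $\{0,1\}^k$, or equivalently as an integer in $\{0,\dots,2^k-1\}$, yields a numerical weight. Every vertex $x \in X$ is assigned the weight $w(x) := (\delta_i(x))_{i=1,\dots,k}$ where $\delta_i(x)$ records the intrinsic data of $x$ in $\tau_i$ that is not already visible from the edges (for instance, whether $\{x\}$ is open, or the cardinality of the minimal open set containing $x$). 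Singletons of $X$ that appear as isolated points in some $\tau_i$ are, in particular, recognised this way.

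The second step is to show that the map $(\tau_1,\dots,\tau_k)\mapsto (G,w,\omega)$ is injective. Given the weighted graph, one extracts for every $i$ the subgraph $H_i = \{e : i \in \omega(e)\}$, reads off its cover relation, takes its reflexive--transitive closure to obtain $\leq_i$, and then applies Alexandrov's theorem to recover $\tau_i$; the vertex weights restore any residual information that is not already encoded in $\leq_i$. Lossless-ness follows.

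The main conceptual issue is not a hard combinatorial obstacle but rather the bookkeeping of choosing an encoding of the label set $\{1,\dots,k\}$ into scalar weights so that the graph is genuinely \emph{vertex-edge-weighted} in the sense of the paper; one could equivalently phrase the result as a vertex-weighted graph by subdividing each edge $e$ by a new vertex carrying the weight $\omega(e)$. Either normalisation produces the desired lossless representation, and the proof itself is entirely formal once Alexandrov's theorem has been invoked.
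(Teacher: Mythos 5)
Your construction is essentially the one the paper uses (take the union of the Hasse diagrams, label each edge by the set of topologies it belongs to, add vertex weights), but the recovery step has a genuine gap: after extracting the subgraph $H_i=\{e: i\in\omega(e)\}$ you ``read off its cover relation'' --- yet $H_i$, as you have built it, is an \emph{undirected} graph, while the cover relation of a poset is asymmetric. An undirected graph does not determine an orientation: already on a two-point set the single edge $xy$ is compatible with both $x\lessdot y$ and $y\lessdot x$, which correspond to two different $T_0$-topologies. So ``its cover relation'' is not determined by the data you have extracted, the reflexive--transitive closure is not defined, and injectivity of your encoding fails unless some further device fixes the directions.

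The paper closes exactly this hole with its vertex weights: it stores $d_i(v)$, the length of a maximal chain at $v$ in $T_i$, and orients each recovered edge by the rule $d_i(o(e))>d_i(t(e))$, which is unambiguous because the two endpoints of a cover edge have different heights. One of the vertex weights you mention in passing --- the cardinality of the minimal open neighbourhood of $x$ in $\tau_i$ --- would serve the same purpose, since for a cover pair the minimal open sets are strictly nested; but you never connect the vertex weights to the orientation problem. Instead you describe them as restoring ``residual information that is not already encoded in $\leq_i$'', which misreads Alexandrov's correspondence: on a finite set the specialization order determines the $T_0$-topology completely, so there is no residual information beyond $\leq_i$. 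The entire burden of the vertex weights is to recover the orientation of the Hasse edges, and that is precisely the step your write-up omits. (Your edge labelling by characteristic vectors versus the paper's product of distinct primes is an immaterial difference; both losslessly encode the set $I(e)$.)
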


Hence, a proposed index structure simply requires a weighted graph:

\begin{Proposal}[\ref{IndexStructure}]
As an index structure for a weighted graph is proposed the subdominant ultrametric of its weigted graph distance matrix.
\end{Proposal}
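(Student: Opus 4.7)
The statement at hand is a design proposal rather than a mathematical theorem, so no formal proof is expected; what is needed is a justification that the subdominant ultrametric meets the criteria for a good index structure (well-defined, computable, hierarchical, and information-preserving in a canonical sense). My plan would therefore be to assemble those four properties in a short structural argument.

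First, I would recall the construction. Given a weighted graph $G=(V,E)$, let $d\colon V\times V\to\mathbb{R}_{\ge 0}$ be the induced shortest-path distance, which is a genuine metric once edge weights are positive and symmetric. The subdominant ultrametric $d^*$ is the pointwise supremum of the set of ultrametrics $\rho\le d$, and it admits the explicit formula $d^*(u,v)=\min_{\gamma}\max_{e\in\gamma} d(e)$, where $\gamma$ runs over paths between $u$ and $v$ in the complete graph on $V$ weighted by $d$. I would verify directly that this formula satisfies the strong triangle inequality, and that it is indeed the maximal ultrametric bounded above by $d$. Crucially, $d^*$ can be computed from a minimum spanning tree of $(V,d)$ in essentially the same time, which addresses the practical side of the proposal.

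Second, I would invoke the classical bijection between ultrametrics on a finite set and rooted, vertex-weighted trees whose leaves are labelled by the set: the tree associated with $d^*$ is the dendrogram produced by single-linkage clustering on $(V,d)$, and the ultrametric distance between two leaves equals the weight at their least common ancestor. This is exactly the hierarchical structure used elsewhere in the paper (and in the $p$-adic setting mentioned in the introduction, where subdiscs of $\mathbb{Q}_p$ are to be assigned to vertices), and it is what delivers the logarithmic-time access alluded to in Section 1.

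Third, I would argue canonicity: among all ultrametrics derivable from $d$, the subdominant one is distinguished by its maximality, i.e.\ it discards the least amount of metric information when passing to a tree model. The main (conceptual) obstacle is not mathematical but expository, namely to make precise in what sense this is the ``best'' tree index; I would settle it by stating and citing the universal property that every ultrametric $\rho\le d$ factors through $d^*$, which makes the proposal the unique canonical choice rather than merely a convenient one.
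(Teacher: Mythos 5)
Your justification is correct and follows essentially the same route as the paper, which likewise rests on the uniqueness and computability of the subdominant ultrametric (citing an algorithm for it and Dijkstra for the distance matrix) and on the fact that the associated tree structure affords logarithmic-time access. Your additions --- the explicit minimax-path formula, the minimum-spanning-tree computation, and the universal (maximality) property --- merely flesh out what the paper states more tersely, so there is no substantive divergence.
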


As an application of this index structure, a heuristic parallel topological sort is proposed:

\begin{algorithm}[\ref{ParallelTopoSort}]
Given an edge-weighted directed acyclic graph,
cluster the vertices in parallel, then sort and merge the clusters in parallel, using the subdominant ultrametric index structure.
\end{algorithm}

\paragraph{2. Laplace spectra w.r.t.\ Haar measure $dx$.}
Unbounded $p$-adic kernel functions $k_p$ depending
on the type of interaction between vertices of a finite simple connected graph are defined: adjacency, graph distance, (subdominant) ultrametric distance. The Cauchy problem for the heat equation w.r.t.\ the corresponding Laplacian operator $\mathcal{H}_\bullet$ is studied, as well as its spectrum on $L^2(Z,dx)$, where $Z\subset\mathds{Q}_p$ is the compact open subset given as the disjoint union of the $p$-adic discs defined by the vertices of a given finite simple and connected graph. In this first part, the measure used on $Z$ is the restricted Haar measure $dx$ from $\mathds{Q}_p$.

\begin{theorem}[\ref{MarkovProcess}]
There exists a probability measure $p_t(x,\cdot)$ with $t\ge0$, $x\in Z$ on the Borel $\sigma$-algebra of the compact open $Z\subset\mathds{Q}_p$ such that the Cauchy problem for the heat equation with the (unbounded) $p$-adic Laplacian operator $\mathcal{H}_\bullet$ has a unique solution of the form
\[
h(x,t)=\int_Zk_p(x,y)p_t(x,dy)
\]
In addition, $p_t(x,\cdot)$ is the transition function of a strong Markov process whose paths are c\`adl\`ag paths.
\end{theorem}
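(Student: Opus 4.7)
The plan is to build the argument in three stages: a spectral analysis that exhibits $\mathcal{H}_\bullet$ as self-adjoint and produces a heat semigroup by functional calculus; extraction of a Markov transition function from this semigroup; and construction of the c\`adl\`ag strong Markov process via Feller--Dynkin theory.

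In the first stage I would use the spectral decomposition of $\mathcal{H}_\bullet$ on $L^2(Z,dx)$ that the paper establishes for each of the three kernel types via ultrametric (Kozyrev-type) wavelets. Since the resulting eigenvalues are real, non-negative, of finite multiplicity, and tend to infinity, $\mathcal{H}_\bullet$ is self-adjoint on its natural dense domain, and the functional calculus yields a contraction semigroup $T_t = e^{-t\mathcal{H}_\bullet}$. An explicit formula for the Schwartz kernel $Z_t(x,y)$ of $T_t$ is then obtained by summing the wavelet series; the stated integral form of the Cauchy-problem solution is read off setting $p_t(x,dy) = Z_t(x,y)\,dy$.

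In the second stage I would check that $Z_t(x,y)\ge 0$, that $\int_Z Z_t(x,y)\,dy = 1$, and that Chapman--Kolmogorov holds, so that $p_t(x,\cdot)$ is a genuine Markov transition function. Normalisation reduces to $T_t \mathbf{1}=\mathbf{1}$, which follows from the constants lying in the kernel of $\mathcal{H}_\bullet$. Chapman--Kolmogorov is automatic from the semigroup property $T_{t+s}=T_t T_s$ combined with the explicit kernel representation. For the path-regularity half of the statement I would verify the Feller property of $(T_t)_{t\ge 0}$ on $C(Z)$: that $T_t$ preserves $C(Z)$ and that $T_t f\to f$ uniformly as $t\downarrow 0$. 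Since $Z$ is compact and the wavelet series for $Z_t(x,y)$ converges uniformly on $Z\times Z$ for each $t>0$, both properties follow from standard spectral-sum estimates. The Feller--Dynkin theorem then yields a c\`adl\`ag strong Markov process on $Z$ with transition function $p_t(x,\cdot)$, completing the proof.

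The main obstacle will be the pointwise positivity of $Z_t(x,y)$ for an unbounded $p$-adic Laplacian of this generality, since $Z_t$ appears a priori only as an alternating wavelet sum. For the Vladimirov--Taibleson operator this is classical, but here $k_p$ carries extra structure from the graph $G$---adjacency, graph distance, or subdominant ultrametric---and each case must be handled separately. My approach is to identify $\mathcal{H}_\bullet$ with the $L^2$-generator of a Dirichlet form of pure-jump type on the ultrametric space $Z$, and then appeal to the Beurling--Deny criterion to conclude that the associated semigroup is Markovian, following the scheme of \cite{ZunigaNetworks}. This step links the spectral, measure-theoretic, and path-theoretic parts of the argument.
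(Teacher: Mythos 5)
Your proposal is essentially correct but follows a genuinely different route from the paper. The paper's proof of this theorem is a one-line appeal to the standard correspondence between Feller semigroups and strong Markov processes with c\`adl\`ag paths; all the work is done beforehand in Proposition \ref{FellerSemigroup}, which produces the Feller semigroup $e^{t\mathcal{H}_\bullet}$ by verifying the hypotheses of the Hille--Yosida--Ray theorem: density of the domain (locally constant functions), the positive maximum principle, and density of the range of $\eta I-\mathcal{H}_\bullet$, the last via a truncation argument with operators $T_k$ on annuli $Z_k$ and a Cauchy-sequence limit. The decisive difference is where positivity preservation comes from: in the paper it is delivered for free by the positive maximum principle, so the heat kernel is never touched; you instead build the semigroup spectrally from the wavelet decomposition and must then prove Markovianity a posteriori, which you correctly identify as the crux and propose to settle by recognising $\mathcal{H}_\bullet$ as the generator of a pure-jump Dirichlet form and invoking Beurling--Deny --- this works, since $k_p$ is symmetric and non-negative, and it even makes $\mathcal{H}_\bullet$ negative semi-definite automatically via the quadratic form, and hands you the explicit heat kernel that the paper only assembles later in Corollary \ref{heatKernel}. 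What your route costs is the extra care needed to upgrade the $L^2$ (quasi-everywhere) Hunt process to a genuine Feller process: uniform convergence of the spectral sum for fixed $t>0$ gives $T_tC(Z)\subseteq C(Z)$, but the strong continuity $T_tf\to f$ in sup norm as $t\downarrow 0$ does not follow from that alone and should be checked on the dense set of locally constant functions together with contractivity; the paper's route avoids this entirely, paying instead with the range-density computation. Both approaches are legitimate; yours is more constructive, the paper's more economical.
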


\begin{theorem}[\ref{SpectrumOfLaplacian}]
The space $L^2(Z,dx)$ has an orthonormal basis consisting of the Kozyrev wavelets supported in the discs $U_v\subset Z$ associated with vertices $v$, and the locally constant functions $Z\to\mathds{C}$ associated with the eigenvectors of a certain Laplacian matrix. These wavelets and functions are all eigenfunctions of operator $\mathcal{H}_\bullet$. In the Kozyrev case, the corresponding eigenvalue can be explicitly given. the operator $\mathcal{H}_\bullet$ is self-adjoint, negative semi-definite in case the discs associated with vertices are sufficiently small, and each eigenvalue has only finite multiplicity.
\end{theorem}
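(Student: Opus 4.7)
The plan is to exploit the two-scale structure of $\mathcal{H}_\bullet$: inside each disc $U_v$ it behaves essentially like a Vladimirov-type operator, whereas across different discs the kernel $k_p(x,y)$ depends only on the pair of vertices $v,w$ labelling the discs. First I would write $L^2(Z,dx) = W \oplus C$, where $C = \bigoplus_v \mathds{C}\cdot\mathbf{1}_{U_v}$ is the finite-dimensional space of functions constant on each $U_v$, and $W = \bigoplus_v W_v$ with $W_v$ the orthogonal complement of $\mathds{C}\cdot\mathbf{1}_{U_v}$ inside $L^2(U_v,dx)$. It is classical that $W_v$ admits an orthonormal basis of Kozyrev wavelets supported in $U_v$, so $W$ does as well.

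For a Kozyrev wavelet $\psi$ supported in $U_v$, I would split $\mathcal{H}_\bullet \psi(x)$ into the diagonal integral over $U_v$ and off-diagonal pieces over $U_w$ with $w\neq v$. Since $k_p(x,y)$ on $U_v\times U_w$ depends only on $v,w$, and since $\psi$ has mean zero on $U_v$ and vanishes outside $U_v$, the off-diagonal contributions collapse to multiples of $\int_{U_w}\psi(y)\,dy = 0$ (for $x \in U_v$) or vanish outright (for $x\notin U_v$). Hence $\mathcal{H}_\bullet\psi$ reduces to the local Vladimirov action on $U_v$, for which $\psi$ is a known eigenfunction; the eigenvalue can be written explicitly in terms of the radius of $U_v$ and the scale of $\psi$.

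For functions in $C$, write $f = \sum_v c_v \mathbf{1}_{U_v}$. A direct computation using that $k_p$ is constant on $U_v \times U_w$ yields
\[
\mathcal{H}_\bullet f \;=\; \sum_v \Bigl(\sum_{w\neq v} k_p(v,w)\,\Vol(U_w)\,(c_v - c_w)\Bigr)\mathbf{1}_{U_v},
\]
so the restriction of $\mathcal{H}_\bullet$ to $C$ is a weighted graph Laplacian matrix $L$ on $\mathds{C}^V$. Its eigenvectors, extended as locally constant functions, provide the remaining part of the claimed eigenbasis. Combined with the Kozyrev wavelets, this produces the asserted orthonormal eigenbasis of $L^2(Z,dx)$.

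Self-adjointness then follows by noting that $\mathcal{H}_\bullet$ is symmetric on the dense algebraic span of this eigenbasis (being diagonal on it with real eigenvalues, once symmetry of $L$ and of the Vladimirov kernel is observed) and invoking the standard criterion that a symmetric operator with a total orthonormal system of eigenvectors is essentially self-adjoint. Negative semi-definiteness reduces to checking that Kozyrev eigenvalues and the eigenvalues of $L$ are all non-positive; the former is automatic from the Vladimirov theory, and the latter holds provided the discs are small enough that the diagonal self-interaction terms dominate the inter-disc couplings. Finite multiplicity is immediate, since a given eigenvalue can arise from only finitely many (disc, scale) combinations in the wavelet part and at each such scale only finitely many wavelets live in $U_v$, while the $C$-part is finite-dimensional. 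The main obstacle I expect is the careful bookkeeping of the off-diagonal integrals for all three kernel types together with an explicit smallness condition on the radii ensuring negative semi-definiteness of $L$.
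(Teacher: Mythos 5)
Your overall architecture is exactly the paper's: split $L^2(Z,dx)=W\oplus C$ with $C$ the functions constant on each disc $U_v$ and $W$ spanned by Kozyrev wavelets supported inside the discs, treat the wavelet part via the local Vladimirov/Kozyrev theory, and identify $\mathcal{H}_\bullet|_C$ with a finite weighted graph Laplacian. However, your key computation in the wavelet case is wrong. For $x\in U_v$ the off-diagonal contribution is
\[
\sum_{w\neq v}k(v,w)\int_{U_w}\bigl(\psi(y)-\psi(x)\bigr)\,dy
=-\psi(x)\sum_{w\neq v}k(v,w)\Vol(U_w),
\]
because only the $\int_{U_w}\psi(y)\,dy$ half vanishes; the $-\psi(x)\Vol(U_w)$ half survives. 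So $\mathcal{H}_\bullet\psi$ does \emph{not} reduce to the local Vladimirov action on $U_v$. The wavelet is still an eigenfunction (the surviving term is a multiple of $\psi$), but the eigenvalue is the local Kozyrev eigenvalue shifted by the weighted degree of $v$: the paper obtains $\lambda_B=\lambda_{B,v}-\sum_{w\neq v}k(v,w)\mu(U_w)$. Since the theorem explicitly asserts that the eigenvalue ``can be explicitly given'', the formula you would write down --- purely in terms of the radius of $U_v$ and the scale of $\psi$ --- is incorrect.

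A second, related slip: for $f=\sum_w c_w\mathbf{1}_{U_w}$ and $x\in U_v$ one gets $\mathcal{H}_\bullet f(x)=\sum_{w\neq v}k(v,w)\Vol(U_w)(c_w-c_v)$, i.e.\ with the sign opposite to yours. The matrix acting on $C$ is therefore \emph{minus} the usual weighted graph Laplacian and is automatically negative semi-definite, with no condition on the discs; the smallness hypothesis (the paper's $m\ge0$) is needed to control the sign of the \emph{Kozyrev} eigenvalues, not of the matrix part. You have attributed the hypothesis to the wrong half of the spectrum, which is a symptom of the sign error. The remaining steps --- mutual orthogonality and completeness of the two families, self-adjointness from the real eigenbasis, and finite multiplicity (which needs the strict monotonicity of $\lambda_{B,v}$ in the scale $d$, a point you assert but do not justify) --- agree with the paper's argument.
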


Assuming now that $\mathcal{H}_\bullet$ is also negative semi-definite, obtain:

\begin{corollary}[\ref{heatKernel}]
The heat kernel of $\mathcal{H}_\bullet$ exists and has the following form:
\[
p(t,x,y)=\sum\limits_{\lambda\in\Spec(\mathcal{H}_\bullet)}e^{\lambda t}\psi_\lambda(x)\psi_\lambda(y)
\]
for $t\ge0$ and $x,y\in Z$, where $\psi_\lambda$ is a normalised eigenfunction of $\mathcal{H}_\bullet$ with eigenvalue $\lambda\in\Spec(\mathcal{H}_\bullet)$.
\end{corollary}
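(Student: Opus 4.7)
The plan is to derive the kernel formula as the integral kernel of the heat semigroup $e^{t\mathcal{H}_\bullet}$, which exists by functional calculus once the operator is known to be self-adjoint and negative semi-definite with purely discrete spectrum.

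First I would invoke Theorem \ref{SpectrumOfLaplacian} to obtain an orthonormal basis $\{\psi_\lambda\}$ of $L^2(Z,dx)$ consisting of eigenfunctions of $\mathcal{H}_\bullet$ (the Kozyrev wavelets supported in the discs $U_v$ together with the locally constant eigenfunctions coming from the graph Laplacian), with corresponding eigenvalues $\lambda\in\Spec(\mathcal{H}_\bullet)$, each of finite multiplicity. Writing any $f\in L^2(Z,dx)$ as $f=\sum_\lambda\langle f,\psi_\lambda\rangle\psi_\lambda$, the spectral theorem for unbounded self-adjoint operators gives
\[
e^{t\mathcal{H}_\bullet}f=\sum_{\lambda\in\Spec(\mathcal{H}_\bullet)}e^{\lambda t}\langle f,\psi_\lambda\rangle\psi_\lambda,
\]
with convergence in $L^2(Z,dx)$ for every $t\ge 0$, where the semi-definiteness $\lambda\le 0$ makes each factor $e^{\lambda t}\le 1$ so that the operator is a contraction on $L^2$.

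Next I would substitute $\langle f,\psi_\lambda\rangle=\int_Z f(y)\psi_\lambda(y)\,dy$ and, for fixed $t>0$, interchange the sum and the integral to identify
\[
\bigl(e^{t\mathcal{H}_\bullet}f\bigr)(x)=\int_Z p(t,x,y)\,f(y)\,dy,\qquad p(t,x,y)=\sum_{\lambda}e^{\lambda t}\psi_\lambda(x)\psi_\lambda(y).
\]
The justification of this interchange is the main technical point. The Kozyrev-type eigenfunctions supported on each disc $U_v$ contribute eigenvalues going to $-\infty$ at a geometric rate in $p$, as in the classical Vladimirov-Taibleson case, so $\sum_\lambda e^{2\lambda t}<\infty$ for $t>0$; combined with $\|\psi_\lambda\|_\infty$ being controlled on each $U_v$ (the wavelets are $p$-adic characters times indicator functions of balls, hence uniformly bounded after normalisation), this produces a Hilbert-Schmidt estimate $\int_{Z\times Z}|p(t,x,y)|^2\,dx\,dy<\infty$ and allows Fubini to conclude.

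Finally, to identify $p(t,x,y)$ with the heat kernel in the proper sense I would compare with Theorem \ref{MarkovProcess}: the transition probability measure $p_t(x,\cdot)$ is uniquely determined by the semigroup $e^{t\mathcal{H}_\bullet}$, so its density with respect to $dx$ (which exists because the Hilbert-Schmidt bound gives an $L^2$ kernel) must coincide with $p(t,x,y)$ above. The expected obstacle is precisely the Fubini/interchange step, since a priori the eigenfunctions $\psi_\lambda$ are merely $L^2$; the resolution relies on the explicit boundedness of Kozyrev wavelets and the finiteness of the graph, which together ensure $\sum_\lambda|\psi_\lambda(x)|^2 e^{\lambda t}$ converges for every $x\in Z$ and $t>0$.
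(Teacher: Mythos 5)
Your proposal is correct and follows the same basic route as the paper: combine the eigenbasis from Theorem \ref{SpectrumOfLaplacian} with the semigroup/Markov-process identification from Theorem \ref{MarkovProcess}. The difference is one of depth rather than direction. The paper's proof is a two-sentence remark: it asserts the corollary is ``a consequence'' of those two theorems and only establishes that $p(t,x,y)$ exists \emph{as a distribution}, acting on test functions via $u\mapsto\int_Z u(y)p(x,y,t)\,dy$; no convergence estimate is given. You go further and actually prove the stronger statement that the kernel is an honest Hilbert--Schmidt kernel for $t>0$: your observation that the Kozyrev eigenvalues decay like $-p^{d(1+\alpha)}$ while the multiplicities grow only like $p^{d-m}(p-1)$ and $\|\psi_\lambda\|_\infty\sim p^{d/2}$ does give $\sum_\lambda e^{2\lambda t}<\infty$ and pointwise convergence of $\sum_\lambda|\psi_\lambda(x)|^2e^{\lambda t}$, so your Fubini step is justified. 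Two small points to tidy up: since the Kozyrev wavelets are complex-valued, the expansion should read $\langle f,\psi_\lambda\rangle=\int_Z f(y)\overline{\psi_\lambda(y)}\,dy$ and the kernel should carry $\overline{\psi_\lambda(y)}$ (a blemish the paper shares), and you should state explicitly that you are working under the standing hypothesis $m\ge 0$ so that negative semi-definiteness, and hence $e^{\lambda t}\le 1$, is available. With those adjustments your argument is a complete and strictly stronger version of what the paper records.
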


\paragraph{3. Spectrum 
of $p$-adic operators from an ultrametric.}

Using $p$-adic polynomial interpolation, a Radon measure $\nu$ on $Z$ is constructed from an ultrametric on the vertex set of a finite graph and the $p$-adic discs representing the vertices. In general, the interpolation polynomial is non-linear of high degree, expectedly depending on the number of vertices. The corresponding Radon measure can be viewed as coming from a differential $1$-form on $Z$, giving rise to the idea that a Mumford curve might be in the background. Furtheromore, 
Inspired by \cite{GNC2010}, ultrametric wavelets generalising Kozyrev wavelets are constructed. Then it holds true that:

\begin{theorem}[\ref{SpectrumOfLaplacian2}]
The space $L^2(Z,\nu)$ has an orthonormal basis consisting of the constant function $\nu(Z)^{-1}$, and the Kozyrev wavelets supported in the discs associated with the vertices of the graph, and the ultrametric wavelets. These functions are all eigenfunctions of the operator $\mathcal{H}_\delta$. The eigenvalues corresponding to the wavelets are explicitly given in terms of the differential $1$-form and quantities derived thereof for producing the Radon measure $\nu$. The operator $\mathcal{H}_\delta$ is self-adjoint and negative-definite on $L^2(Z,\nu)$.
\end{theorem}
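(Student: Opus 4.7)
The plan is to verify the four assertions of the theorem in turn: orthonormality in $L^2(Z,\nu)$, completeness, the eigenfunction--eigenvalue computation, and finally self-adjointness together with negative-definiteness of $\mathcal{H}_\delta$. The organising principle is a twofold hierarchy on $Z$: a coarse tree coming from the ultrametric on the vertex set $V(G)$, which refines into the disjoint decomposition $Z=\bigsqcup_{v\in V(G)}U_v$, and the intrinsic $p$-adic tree inside each disc $U_v$. The basis splits accordingly into Kozyrev wavelets living inside individual $U_v$, ultrametric wavelets living at the coarser tree-level scales above them, and the single constant $\nu(Z)^{-1}$.

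For orthonormality and completeness I would argue scale by scale. Inside each $U_v$, the Kozyrev wavelets together with $\mathbf{1}_{U_v}$ form an orthonormal basis of $L^2(U_v,dx)$; since $\nu$ restricted to $U_v$ takes the form $f_v\,dx$ for the local expression $f_v$ of the interpolating differential $1$-form constructed earlier, the passage from $dx$ to $\nu$ is absorbed into a scalar rescaling of each wavelet. At the coarser tree scales the ultrametric wavelets generalising Kozyrev's --- built from characters of the finite cyclic groups acting on the children of each internal node of the ultrametric tree --- fill in the orthogonal complement of the span of $\{\mathbf{1}_{U_v}\}_v$ inside the mean-zero subspace, and their mutual orthogonality reduces to character orthogonality on each finite group. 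The constant $\nu(Z)^{-1}$ completes the decomposition, and completeness is confirmed by a level-by-level dimension count.

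The eigenvalue computation is the main obstacle. For a Kozyrev wavelet supported in $U_v$, the action of $\mathcal{H}_\delta$ is essentially local to $U_v$ and reduces to a Vladimirov--Taibleson style computation twisted by the density $f_v$ of $\nu|_{U_v}$; here the differential $1$-form data enter the eigenvalue explicitly. For an ultrametric wavelet attached to an internal node of the ultrametric tree, the kernel $k_p$ is piecewise constant on products of distinct discs at each tree level, so $\mathcal{H}_\delta$ acts by a weighted sum over the children of that node. Since the wavelet is itself a sum of characters over those children, the action diagonalises and yields an explicit eigenvalue in terms of the $\nu$-masses of the subdiscs and the level-wise kernel values. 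The eigenvalue of the constant function follows from a single direct integration of $k_p$ against $\nu$.

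Finally, self-adjointness on the dense domain spanned by this basis is immediate from the symmetry of $k_p$ together with the orthonormality and real eigenvalues established above. Negative-definiteness then reduces to checking strict negativity of each eigenvalue computed in the previous step; in contrast to the Haar-measure setting of Theorem~\ref{SpectrumOfLaplacian}, where one obtains only negative semi-definiteness, the nontrivial mass distribution coming from the interpolating $1$-form should strictly separate the constant function's eigenvalue from zero, upgrading ``semi-definite'' to ``definite''.
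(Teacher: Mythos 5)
Your overall architecture matches the paper's: split the basis into Kozyrev wavelets inside each $U_v$, ultrametric wavelets built from characters of the child sets of the nodes of $T^\delta$, and the constant function; use the fact that $\absolute{f_Z}_p$ is locally constant to reduce $\nu$ to a rescaled Haar measure on each $U_v$; and diagonalise level by level. However, there are three concrete problems. First, and most seriously, your mechanism for upgrading semi-definiteness to definiteness is wrong: for an operator of the form $\mathcal{H}_\delta u(x)=\int_Z k_p(x,y)\left(u(y)-u(x)\right)d\nu(y)$ the constant function is annihilated identically, so its eigenvalue is $0$ for \emph{every} choice of Radon measure --- no ``nontrivial mass distribution'' can separate it from zero. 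The operator is at best negative semi-definite on all of $L^2(Z,\nu)$ (the paper's own proof in fact concludes only semi-definiteness, despite the wording of the theorem), and your final paragraph, as well as the claim that the constant's eigenvalue ``follows from a single direct integration of $k_p$ against $\nu$'', would not survive being written out.

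Second, the Kozyrev eigenvalue computation is not ``essentially local to $U_v$'': since the wavelet vanishes on every other disc $U_{v'}$ but $-\psi_{B,j}(x)$ does not, the integrals over the $U_{v'}$ with $v'\neq v$ contribute a degree-like term $-\sum_{v'\neq v}\phi(\absolute{a_{v'}}_p)\,\delta(v,v')^{-\alpha}\mu(U_{v'})$ to the eigenvalue, in addition to the density-twisted Vladimirov--Taibleson part $\phi(\absolute{a_v}_p)\lambda_{B,v}$; your sketch captures only the latter. Third, the diagonalisation on the ultrametric wavelets is not automatic from piecewise constancy of the kernel: the sum $\sum_{\mathfrak{m}\neq\mathfrak{m}(x)}\nu(\mathfrak{m})\chi_{\mathfrak{n}}(\mathfrak{m})$ collapses to $-\nu(\mathfrak{m}(x))\chi_{\mathfrak{n}}(\mathfrak{m}(x))$ only because the measure $\nu$ was constructed so that all children of a node carry \emph{equal} mass (Lemma \ref{childrenEquality}) and because all distinct children of $\mathfrak{n}$ are at the common distance $\diam_\delta(\mathcal{C}(\mathfrak{n}))$. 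Without invoking the equal-mass property the character sum does not vanish and the ultrametric wavelets are not eigenfunctions, so this hypothesis must appear explicitly in your argument.
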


Here, $\delta$ is an ultrametric on the vertex set of the graph. It is a replacement for the $\bullet$ in the previous set-up, where it instances the subdominant ultrametric of a weighted graph metric.

\paragraph{4. Convergence and error estimates.}
By discretising the space $Z$ via finite rooted trees resulting from pruning, approximate finite-dimensional operators for $\mathcal{H}_\bullet$ can be defined, and the errors in the solution to the Cauchy problem are estimated. 

\begin{theorem}[\ref{errorSemigroups}]
    Let $T_{\bullet}^{\ell}(t)$ and $T_{\bullet}(t)$ be the semigroups attached to the operators $\mathcal{H}_{\bullet}^{\ell}$ and $\mathcal{H}_{\bullet}$, respectively.  Let $u\in C(Z)$ and $t\in [0,\gamma]$, then
\begin{equation*}
    \begin{split}
        ||T_{\bullet}^{\ell}(t)u -T_\bullet (t)u||_{\infty}\leq  \gamma||u||_{\infty}\left(\sum_{u\neq v, [u]_{\mathfrak{n}}\equiv[v]_{\mathfrak{n}}}C_{u,v}\Vol(U_v)+\Vol(Z_{\ell}\setminus Z)\max_{x\in Z, y\in Z_{\ell}\setminus Z} |k_{\bullet}^{\ell}(x,y)|\right).
    \end{split}
\end{equation*}
\end{theorem}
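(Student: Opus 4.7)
The plan is to apply the variation-of-constants (Duhamel) identity on an appropriate common ambient space, which rewrites the semigroup difference as
\begin{equation*}
T_\bullet^\ell(t)u - T_\bullet(t)u = \int_0^t T_\bullet^\ell(t-s)\bigl(\mathcal{H}_\bullet^\ell - \mathcal{H}_\bullet\bigr) T_\bullet(s)u\, ds.
\end{equation*}
Since both semigroups arise from Markov processes in the spirit of Theorem \ref{MarkovProcess}, they act as sup-norm contractions on their respective spaces of continuous functions. Consequently,
\begin{equation*}
||T_\bullet^\ell(t)u - T_\bullet(t)u||_\infty \leq \int_0^t ||(\mathcal{H}_\bullet^\ell - \mathcal{H}_\bullet) T_\bullet(s)u||_\infty\, ds,
\end{equation*}
and since $||T_\bullet(s)u||_\infty\leq ||u||_\infty$, the remaining task reduces to a uniform kernel-level estimate of the form $||(\mathcal{H}_\bullet^\ell - \mathcal{H}_\bullet)v||_\infty \leq M\,||v||_\infty$, where $M$ is the parenthesized quantity in the theorem.

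Next, I would split the kernel error pointwise into two disjoint contributions, matching the two summands of the claimed bound. For $x\in Z$ one writes
\begin{equation*}
\bigl(\mathcal{H}_\bullet^\ell - \mathcal{H}_\bullet\bigr)v(x) = \int_Z\bigl(k_\bullet^\ell(x,y) - k_\bullet(x,y)\bigr)v(y)\,dy + \int_{Z_\ell\setminus Z}k_\bullet^\ell(x,y)v(y)\,dy,
\end{equation*}
after extending $v$ in the natural way dictated by the pruning construction. On $Z$, the pruned kernel $k_\bullet^\ell$ coarsens $k_\bullet$ along the partition indexed by the equivalence relation $[\cdot]_\mathfrak{n}$, so the inner integral reduces to a finite sum over pairs of distinct vertices $u,v$ with $[u]_\mathfrak{n}\equiv[v]_\mathfrak{n}$, where the two kernel values disagree. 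Each such pair contributes at most $C_{u,v}\Vol(U_v)\cdot||v||_\infty$, where $C_{u,v}$ is the maximal deviation of $k_\bullet$ between the cells $U_u$ and $U_v$. The integral over $Z_\ell\setminus Z$ is controlled directly by $\Vol(Z_\ell\setminus Z)\cdot\max_{x\in Z,\,y\in Z_\ell\setminus Z}|k_\bullet^\ell(x,y)|\cdot||v||_\infty$.

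Substituting these bounds into the Duhamel integral and using $t\leq\gamma$ produces the stated inequality. The main obstacle is the first step: making precise the sense in which $\mathcal{H}_\bullet^\ell$ and $\mathcal{H}_\bullet$ act on a common ambient space so that Duhamel's formula is rigorously applicable. This requires committing to a specific extension of $C(Z)$ into $C(Z_\ell)$ that is compatible with the pruning construction, and then carefully verifying that the coarsening of the kernel is captured exactly by the equivalence relation $[\cdot]_\mathfrak{n}$, so that only the listed pairs appear in the kernel difference restricted to $Z$. Once this bookkeeping is in place, the remaining estimates are a standard application of Fubini together with the contractivity of the two semigroups.
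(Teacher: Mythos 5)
Your proposal follows essentially the same route as the paper: a Duhamel (variation-of-constants) identity combined with the contractivity of the Feller semigroups reduces the problem to a sup-norm bound on $\mathcal{H}_\bullet^\ell - \mathcal{H}_\bullet$, which is then split into the on-$Z$ kernel discrepancy (a finite sum over vertex pairs lying in the same pruned cell $[\cdot]_\mathfrak{n}$, each controlled by $C_{u,v}\Vol(U_v)$ — in the paper $C_{u,v}$ is made explicit via the mean value theorem applied to $t\mapsto t^{-\alpha}$) and the contribution from $Z_\ell\setminus Z$. The concern you flag about a common ambient space for the two generators is legitimate but is not addressed in the paper's own proof either, so your plan matches the published argument in both structure and level of rigour.
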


\begin{theorem}[\ref{errorSemigroups2}]
    Let $\mathcal{H}_a$ and $\mathcal{H}_b$, where $a,b\in\mathset{\kappa,d_E,\delta}$. Let $T_a$ and $T_b$  the respective attached subgroups. Then the following holds 
    \[||T_{a}(t) -T_b||_{\infty} \leq 2t \left(\sum_{u\neq v, [u]_{\mathfrak{n}}\equiv[v]_{\mathfrak{n}}}\tilde{C}_{u,v}\Vol(U_v)\right).\]  
\end{theorem}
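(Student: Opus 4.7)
The plan is to use Duhamel's formula to reduce the comparison of the two semigroups to a uniform comparison of their generators, and then to exploit the common integral form of $k_\kappa,k_{d_E},k_\delta$ together with the disc decomposition of $Z$ to isolate the pairs $(u,v)$ which contribute. Concretely, for $u$ in a common core of $\mathcal{H}_a,\mathcal{H}_b$, differentiating $s\mapsto T_a(t-s)T_b(s)u$ and integrating over $[0,t]$ yields
\[
T_a(t)u-T_b(t)u=\int_0^t T_a(t-s)(\mathcal{H}_b-\mathcal{H}_a)T_b(s)u\,ds,
\]
and the $L^\infty$-contractivity of $T_a$ coming from the Markov property of Theorem \ref{MarkovProcess} then gives
\[
\|T_a(t)u-T_b(t)u\|_\infty\le\int_0^t\|(\mathcal{H}_a-\mathcal{H}_b)T_b(s)u\|_\infty\,ds.
\]

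Writing each generator in its common integral form $\mathcal{H}_\bullet f(x)=\int_Z k_\bullet(x,y)(f(y)-f(x))\,d\mu(y)$ and using the trivial estimate $|f(y)-f(x)|\le 2\|f\|_\infty$ yields
\[
|(\mathcal{H}_a-\mathcal{H}_b)f(x)|\le 2\|f\|_\infty\int_Z|k_a(x,y)-k_b(x,y)|\,d\mu(y),
\]
whose application to $f=T_b(s)u$ (with $\|T_b(s)u\|_\infty\le\|u\|_\infty$) supplies the factor $2\|u\|_\infty$ that, together with $\int_0^t ds=t$, produces the prefactor $2t$ of the statement.

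To finish, I would exploit the fact that each kernel $k_\bullet$ is piecewise constant on products $U_u\times U_v$, and that all three versions coincide on the diagonal blocks $U_v\times U_v$ as well as on the off-diagonal blocks for which $[u]_{\mathfrak n}\not\equiv[v]_{\mathfrak n}$. Hence $|k_a-k_b|$ is supported on the pairs with $u\neq v$ and $[u]_{\mathfrak n}\equiv[v]_{\mathfrak n}$, and on each such block it is dominated by a constant $\tilde C_{u,v}$. Integrating over $y\in U_v$ produces the factor $\Vol(U_v)$, summing over the admissible index pairs yields the bracketed sum, and finally taking the supremum over $\|u\|_\infty\le 1$ converts the resulting pointwise inequality into the claimed operator-norm bound.

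The main obstacle is this last step: the constants $\tilde C_{u,v}$ have to be chosen so as to dominate the three differences $|k_\kappa-k_{d_E}|,|k_\kappa-k_\delta|,|k_{d_E}-k_\delta|$ \emph{uniformly} on each block, which requires a careful side-by-side comparison of the three kernel constructions on every pair of discs and a verification that the residual discrepancies really lie inside the index set $\{u\neq v,\,[u]_{\mathfrak n}\equiv[v]_{\mathfrak n}\}$. A secondary technical point is justifying the Duhamel identity for the unbounded generators $\mathcal{H}_a,\mathcal{H}_b$, which should follow from the existence of a dense common invariant core assembled from the Kozyrev wavelets, ultrametric wavelets and locally constant functions exhibited in Theorems \ref{SpectrumOfLaplacian} and \ref{SpectrumOfLaplacian2}.
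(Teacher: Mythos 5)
Your proposal follows essentially the same route as the paper: a Duhamel-type identity combined with contractivity of the Feller semigroups reduces the problem to a uniform bound on $\|(\mathcal{H}_a-\mathcal{H}_b)u\|_\infty$, which is then obtained blockwise on the products $U_u\times U_v$ exactly as you describe. The one ingredient you leave abstract --- the origin of the constants $\tilde C_{u,v}$ --- is supplied in the paper by the mean value theorem applied to $t\mapsto t^{-\alpha}$, giving $\tilde C_{w,v}=\alpha\,|\dist_E(U_w,U_v)-\delta(U_w,U_v)|/\delta(U_w,U_v)^{\alpha+1}$ (using $\delta\le d_E$), and your worry about whether the discrepancies really concentrate on the index set $\{u\neq v,\ [u]_{\mathfrak n}\equiv[v]_{\mathfrak n}\}$ is a legitimate one that the paper itself does not fully address, since it only treats the pair $(d_E,\delta)$ explicitly.
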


\begin{theorem}[\ref{CPerror}]
Let $u\in X\times[0,\infty)$ be a solution of the Cauchy problem for the heat equation
\begin{align*}
\frac{\partial}{\partial t}u(x,t)&-\mathcal{H}_{\bullet}u(x,t)=0
\\
u(x,0)&=u_0(x)\in X
\end{align*}
and $u_n\in X_n\times[0,\infty)$ a solution of that for
\begin{align*}
\frac{\partial}{\partial t}u_n([x]_n,t)&-\mathcal{H}_{\bullet}^{(n)} u_n([x]_n,t)=0
\\ u_n([x]_n,0)&=P_nu_0([x]_n)\in X_n
\end{align*}
Then it holds true that
\[
\lim\limits_{n\to\infty}\sup\limits_{0\le t\le\tau}\norm{E_nu_n([x]_n,t)-u(x,t)}=0
\]
for $\tau\ge 0$.
\end{theorem}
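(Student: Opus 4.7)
The plan is to view this as a Trotter--Kato-type convergence statement and to reduce everything to the uniform semigroup error bound of Theorem~\ref{errorSemigroups}. Since the Cauchy problems are well-posed, write the solutions as semigroup orbits
\[
u(\cdot,t) = T_\bullet(t) u_0, \qquad u_n(\cdot,t) = T_\bullet^{(n)}(t) P_n u_0,
\]
where $T_\bullet$ and $T_\bullet^{(n)}$ are the $C_0$-semigroups generated by $\mathcal{H}_\bullet$ and $\mathcal{H}_\bullet^{(n)}$, and $P_n : X \to X_n$, $E_n : X_n \to X$ are the projection onto level-$n$ locally constant functions and its natural inclusion back into $X$.

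The first step is the triangle splitting
\[
\norm{E_n u_n(\cdot,t) - u(\cdot,t)}_\infty \le \norm{E_n T_\bullet^{(n)}(t) P_n u_0 - T_\bullet(t) E_n P_n u_0}_\infty + \norm{T_\bullet(t)(E_n P_n u_0 - u_0)}_\infty.
\]
Applying Theorem~\ref{errorSemigroups} to the argument $E_n P_n u_0$ (whose sup-norm is bounded by $\norm{u_0}_\infty$) controls the first summand by the stated expression. The second summand is handled by the contractivity of $T_\bullet(t)$ together with $\norm{E_n P_n u_0 - u_0}_\infty \to 0$, which holds because $u_0 \in X \subseteq C(Z)$ is uniformly continuous on the compact set $Z$ and the discs of the level-$n$ partition have diameters tending to zero.

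It then remains to verify that the bound from Theorem~\ref{errorSemigroups} tends to zero uniformly in $t \in [0,\tau]$. The $t$-dependence enters only through the prefactor $\gamma$, which one takes equal to $\tau$, yielding uniformity for free. The geometric factor $\Vol(Z_n \setminus Z)$ tends to zero by construction of the pruning, and the combinatorial sum $\sum_{u \neq v,\, [u]_{\mathfrak{n}} \equiv [v]_{\mathfrak{n}}} C_{u,v} \Vol(U_v) \to 0$ because, once $n$ is sufficiently large, distinct vertices $u \neq v$ lie in different equivalence classes of $[\cdot]_{\mathfrak{n}}$, so only vanishing-volume terms survive.

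The hard part will be ensuring the stability hypothesis implicit in this scheme, namely a uniform-in-$n$ bound on $\norm{T_\bullet^{(n)}(t)}$, and that the extended operator $E_n T_\bullet^{(n)}(t) P_n$ fits within the framework of Theorem~\ref{errorSemigroups} when evaluated on $E_n P_n u_0$ rather than on a generic continuous function. Stability is supplied by the self-adjointness and negative semi-definiteness of $\mathcal{H}_\bullet^{(n)}$ inherited from Theorem~\ref{SpectrumOfLaplacian}, which renders each $T_\bullet^{(n)}(t)$ a contraction in $L^2$; combined with compactness of $Z$ and the density of locally constant functions, the wavelet diagonalisation then propagates this to the $\sup$-norm, closing the argument.
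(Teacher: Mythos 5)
Your overall strategy --- recast both Cauchy problems as semigroup orbits, split off the projection error $\norm{E_nP_nu_0-u_0}_\infty$, and then run a consistency-plus-stability (Trotter--Kato) argument --- is the same high-level route the paper takes, which simply verifies conditions (A), (B), (C') of Miklav\v{c}i\v{c}, Ch.~5.4, and cites Thm.~5.4.7 there. But the concrete step you use to control the main term is wrong: Theorem~\ref{errorSemigroups} compares $T_\bullet(t)$ with the \emph{tree-truncation} semigroup $T_\bullet^{\ell}(t)$, whose generator $\mathcal{H}_\bullet^{\ell}$ still acts on continuous functions on the enlarged domain $Z_\ell\supseteq Z$, and whose error is governed by the fixed tree $T^\delta$: the constants $C_{u,v}$ measure the discrepancy between the two kernels and $\Vol(Z_\ell\setminus Z)$ depends on the truncation level $\ell$, not on $n$. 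The operator $\mathcal{H}_\bullet^{(n)}$ in Theorem~\ref{CPerror} is a different object altogether, a finite-dimensional discretisation on residue classes $[x]_n\in\mathds{Z}_p/p^n\mathds{Z}_p$. Theorem~\ref{errorSemigroups} therefore says nothing about $\norm{E_nT_\bullet^{(n)}(t)P_nu_0-T_\bullet(t)u_0}_\infty$, and your claim that its right-hand side tends to zero as $n\to\infty$ (``distinct vertices eventually lie in different classes'', ``$\Vol(Z_\ell\setminus Z)\to0$'') conflates the truncation parameter $\ell$ with the discretisation parameter $n$; those quantities are constants of the fixed tree and do not vanish as $n$ grows.

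What is needed in place of that citation is the consistency estimate $\norm{\mathcal{H}_\bullet^{(n)}P_nu-P_n\mathcal{H}_\bullet u}\to0$ for $u$ in a core of $\mathcal{H}_\bullet$ (condition (C')), which amounts to comparing $\int_v\absolute{x-y}_p^{-\alpha}(u(y)-u(x))\,dy$ with its Riemann sum over residue classes; this is where the real work of the theorem lies and it is absent from your argument. Your stability step also has a gap: negative semi-definiteness gives a contraction in $L^2$, but the theorem's norm is the sup-norm, and ``wavelet diagonalisation propagates this to the sup-norm'' is not a valid inference. The correct route is that $\mathcal{H}_\bullet^{(n)}$ is a finite Markov generator (nonnegative off-diagonal rates, zero row sums), hence generates a positivity-preserving contraction semigroup on $(X_n,\norm{\cdot}_\infty)$ directly, giving condition (B). With (B) and (C') established, the abstract approximation theorem closes the argument exactly as the paper does.
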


The article is structured as follows: In Section 2, finite multi-topologies are introduced as a system of finite states endowed with multple $T_0$-topologies and interpreted as a complex system with transitions w.r.t.\ different kinds of topological structures, and diffusion on such.
Section 3 defines unbounded $p$-adic graph operators on finite weighted graphs with different kinds of kernel functions representing the graph structure via adjacency, graph distance and subdominant ultrametric, defines the corresponding notion of \emph{hierarchical Parisi operator} and studies the Cauchy problems and Markov process properties. 
Section 4 estimates errors of the different kinds of operators with respect to each other via tree truncation.

\section{Finite multi-$T_0$-topologies}

The following setting of a complex system is considered:

\begin{itemize}
\item a finite set of states 
\item multiple directed connections between states without directed circuits
\item transitions between the states depending on the multiple connections between the states
\end{itemize}

This complex system is to be viewed as being described by transitions between multiple 
 directed acyclic graphs on a fixed finite point set.
 Each of these directed acyclic graphs can be viewed as the Hasse diagram of a finite $T_0$-space in the sense of \cite{Alexandrov1937}. So, let
 $T_1,\dots, T_N$ be a family of $T_0$-spaces on a given finite pointset $V$, and let $H_1,\dots, H_N$ be the corresponding Hasse diagrams, respectively, and let $G_i=(V,E_i)$ be the finite simple and undirected graph associated with $H_i$ for $i=1,\dots,N$.
Let
\[
E=\bigcup\limits_{i=1}^N E_i
\]
and
\[
I(e):=\left\{i\in\{1,\dots,\}\mid e\in E_i\right\}
\]
Then
\[
w\colon E\to\mathds{N},\;e\mapsto\prod\limits_{i\in I(e)}p_i^{-1}
\]
where $p_i$ is a distinct prime number associated with $i=1,\dots,N$.
Let
\[
d\colon V\to\mathds{N}^N,\;v\mapsto d_i(v)
\]
where $d_i(v)$ is the dimension of $v\in V$, viewed as a point in the $T_0$-space $T_i$, $i=1,\dots,N$. This dimension is the length of a maximal chain in $T_i$.

\begin{Lemma}\label{recoverMultiTopo}
Given the tuple $G=(V,E,w,d)$, each $T_i$ can be recovered for $i=1,\dots,N$.
\end{Lemma}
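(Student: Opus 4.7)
The plan is to reconstruct each $T_i$ from $G=(V,E,w,d)$ in two stages: first recover the undirected edge set $E_i$ of the Hasse diagram $H_i$, then orient each edge to obtain $H_i$ itself. By \cite{Alexandrov1937}, a finite $T_0$-topology on $V$ is equivalent to a partial order on $V$, and this partial order is the transitive closure of the cover relations encoded by $H_i$, so recovering $H_i$ recovers $T_i$.

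For the first stage, observe that $w(e)$ is a product over $i\in I(e)$ of factors involving the pairwise distinct primes $p_i$. Because these primes are coprime, unique factorisation of $w(e)$ reads off the set $I(e)$, and then $E_i=\{e\in E\colon i\in I(e)\}$. This yields the underlying undirected graph $G_i=(V,E_i)$ of $H_i$.

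For the second stage, take any edge $\{u,v\}\in E_i$. Being an edge of the Hasse diagram $H_i$, it comes from a cover relation, so either $u\lessdot v$ or $v\lessdot u$ in $T_i$. Since $d_i(w)$ is the length of a maximal chain attached to $w$, every cover $u\lessdot v$ satisfies $d_i(u)<d_i(v)$: any maximal chain ending at $u$ extends by $v$. In particular $d_i(u)\neq d_i(v)$, so the orientation of the edge is unambiguously read off by consulting the $i$-th coordinate of $d$, with the endpoint of smaller $d_i$-value lying below. This recovers $H_i$ edge by edge, and hence $T_i$. The only substantive input is this strict monotonicity of $d_i$ along cover relations; both stages are otherwise pure bookkeeping, so I do not foresee a genuine obstacle.
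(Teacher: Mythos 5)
Your proposal is correct and follows essentially the same two-stage argument as the paper: recover each $E_i$ from the prime factorisation of the edge weight, then orient each edge by comparing the $i$-th dimension values $d_i$ of its endpoints, and finally invoke the fact that a finite $T_0$-topology is determined by its Hasse diagram. Your added justification that $d_i$ is strictly monotone along cover relations is a small refinement the paper leaves implicit, but it does not change the route.
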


\begin{proof}
Let $e\in E$. Then, 
from the prime factorisation of $w(e)^{-1}$, the edges
$e_i\in E_i$ with $i\in I(e)$ can be recovered. By looking up $d_i(v)$ for each $v\in V$ and $i=1\dots,N$, it is now possible to find the orientation of $e_i$ via the inequality
\[
d_i(o(e_i))>d_i(t(e_i))
\]
where $o(\varepsilon)$ denotes the origin vertex, and $t(\varepsilon)$ the terminal vertex of an oriented edge $\varepsilon$ in a directed graph. This recovers each Hasse diagram $H_i$ for $i=1,\dots,N$. Since a finite $T_0$-topology is uniquely determined by its Hasse diagram, the assertion is now proven.
\end{proof}

The significance of Lemma \ref{recoverMultiTopo} is as follows:

\begin{observation}\label{losslessGraphRep}
The multiple topologies on the point set $V$ have a lossless representation as a vertex-edge-weighted graph $G=(V,E,w,d)$.
\end{observation}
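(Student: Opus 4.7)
The plan is to observe that this statement is essentially a reformulation of Lemma \ref{recoverMultiTopo}, so the proof will be very short. A lossless representation of a collection of topologies $T_1, \dots, T_N$ on $V$ means precisely a function that takes the family $(T_i)$ to some combinatorial object from which the family can be fully reconstructed. Thus what I need to verify is two things: (1) the construction $(T_1,\dots,T_N)\mapsto (V,E,w,d)$ given in the paragraphs preceding the lemma produces a well-defined vertex-edge-weighted graph, and (2) the map is injective in the sense that the original family is uniquely determined by the image.

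First I would verify well-definedness: the vertex set $V$ is fixed, the edge set $E=\bigcup_{i=1}^N E_i$ is a finite simple graph because each $E_i$ is, and the weights $w(e)=\prod_{i\in I(e)}p_i^{-1}$ are well-defined positive rationals (using distinct primes $p_i$); the vertex labeling $d(v)=(d_i(v))_i\in\mathds{N}^N$ is well-defined because each finite $T_0$-space has a well-defined dimension function (the length of the longest chain through $v$ in the associated poset).

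Next, the injectivity — which is the only non-trivial content — is exactly Lemma \ref{recoverMultiTopo}: from the prime factorisation of $w(e)^{-1}$ we recover the membership indicator $I(e)$, hence the underlying edge sets $E_i$; from $d_i$ we recover the orientations of the Hasse edges, hence each Hasse diagram $H_i$; and by Alexandrov's correspondence \cite{Alexandrov1937} each $T_0$-topology on a finite set is determined by its Hasse diagram.

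The main ``obstacle'' here is purely expository rather than mathematical: one must make explicit that the notion of lossless representation used in the observation is precisely the invertibility statement of the lemma, and that the well-definedness side of the construction has been tacitly assumed in the setup. I would therefore write the proof as a single sentence noting that the claim follows immediately from Lemma \ref{recoverMultiTopo}, since the construction $(T_1,\dots,T_N)\mapsto (V,E,w,d)$ is well-defined and, by the lemma, injective.
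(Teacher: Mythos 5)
Your proposal is correct and matches the paper's treatment: the paper offers no separate proof of the observation, presenting it as the immediate significance of Lemma \ref{recoverMultiTopo}, exactly as you do. Your added remarks on well-definedness (and the silent correction that the weights land in the positive rationals rather than $\mathds{N}$ as the paper's typo suggests) are harmless elaborations of the same argument.
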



An index structure of a data set is a structure which allows fast access and editing of a data set. Many index structures are a tree structure defined on the data set, because trees allow traversals in logarithmic time.
For this reason:

\begin{proposal}\label{IndexStructure}
Given a finite weighted graph $G$, the tree structure given by the subdominant ultrametric of the weighted graph distance matrix of $G$ is proposed as an index structure for $G$.
\end{proposal}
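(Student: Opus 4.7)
The statement at hand is a design proposal rather than a theorem, so the task is to justify rather than strictly prove it. The plan is to argue for its soundness in three steps: establishing the existence and canonical nature of the subdominant ultrametric, exhibiting the ultrametric–tree correspondence that underlies the claimed tree structure, and noting the access-time properties that make the construction genuinely useful as an index.

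First, I would recall the standard construction of the subdominant ultrametric $\hat{d}$ of the weighted graph distance $d_G$ on $V$. For $u,v\in V$ one sets
\[
\hat{d}(u,v)=\min_{\pi\colon u\to v}\;\max_{e\in\pi}d_G(e),
\]
the minimum being taken over all paths $\pi$ in the complete graph on $V$. One then checks by a short three-point argument that $\hat{d}$ satisfies the strong triangle inequality, that $\hat{d}\le d_G$, and that every ultrametric bounded above by $d_G$ is bounded above by $\hat{d}$. Equivalently, $\hat{d}$ is the bottleneck distance induced by any minimum spanning tree of the complete weighted graph $(V,d_G)$, so it is computable in $O(|V|^2)$ or better using standard MST algorithms.

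Second, I would invoke the classical dictionary between finite ultrametric spaces and rooted edge-weighted trees whose leaves are identified with the underlying point set. The tree is obtained by single-linkage agglomerative clustering: sweeping the threshold $r$ upward through the distinct values of $\hat{d}$ produces a strictly nested sequence of equivalence relations $\sim_r$ whose equivalence classes form the internal nodes of the tree, with leaves $V$ at the bottom and the whole of $V$ at the root. This gives a canonical rooted tree $T(G)$ attached to $G$, which is the combinatorial object referred to in the proposal.

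Third, I would argue that $T(G)$ is appropriate as an index: traversal from the root to any leaf requires depth-many comparisons, so balanced instances allow $O(\log|V|)$ access; range queries of the form $\mathset{u:\hat d(u,v)\le r}$ reduce to locating a single ancestor in $T(G)$; and the ultrametric triangle inequality provides a branch-and-bound bound for nearest-neighbour queries under $d_G$, since $d_G\ge\hat d$. The main obstacle is not proof-theoretic but conceptual: one must accept that replacing $d_G$ by $\hat d$ is a lossy compression. This is justified downstream by the use of $T(G)$ in the heuristic parallel topological sort and, later, by its role as the combinatorial skeleton on which the $p$-adic Laplacians of Sections 3–4 are defined.
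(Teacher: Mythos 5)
Your justification is correct and aligns with the paper's own (very brief) treatment: the paper offers no proof for this Proposal, only the remark that the subdominant ultrametric of a finite metric is unique and computable (citing \cite{RTV1986} and Dijkstra), together with the earlier observation that tree-shaped index structures permit logarithmic-time traversal. Your three steps --- the min-max/bottleneck characterisation, the ultrametric--rooted-tree dictionary, and the access-time argument --- simply spell out in more detail the same considerations the paper leaves implicit, so there is no substantive divergence to report.
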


Observe that the subdominant ultrametric of a metric on a finite set is unique, and can be computed by the algorithm in \cite{RTV1986}. The distance matrix for the graph $G$ can be calculated e.g.\ by Dijkstra's shortest path algorithm \cite{Dijkstra1959}.
\newline

As an application of the hierarchical structure on the vertices given by the subdominant ultrametric, a structured topological sort becomes possible. For this, assume that $H$ is a finite directed acyclic graph with weighted edges giving rise to the weighted undirected graph $G$. The directed acyclic graph structure imposes a partial ordering on $V$, denoted as $\le$.
\newline

Define $U(x)$ to be the smallest  ultrametric ball in $V$ (w.r.t.\ the subdominant ultrametric of the weighted graph distance) containing an element $x\in V$. Call it the \emph{minimal cluster of $x\in V$}.

\begin{alg}[Sorted minimal cluster]\label{clusterSort}
Do the following:
\begin{enumerate}
\item Select $x\in V$. 
\item Retrieve  $U(x)\subset V$.
\item Topologically sort $U(x)$.
\end{enumerate}
\end{alg}

Step 2.\ is done by traversing from $x$ to its parent node $v$, and then to all the child nodes of $v$ in the ultrametric tree given by the subdominant ultrametric. Step 3.\ can be done e.g.\ with Kahn's algorithm \cite{Kahn1962}.

\begin{alg}[Compare minimal clusters]\label{compareClusters}
Given $U(x)$,
do the following:
\begin{enumerate}
\item Select $y\in V\setminus U(x)$.
\item Retrieve $U(y)$.
\item Check if $x\in U(y)$. If so, then $U(x)\subseteq U(y)$. Otherwise not.
\item Check if $y\in U(x)$. If so, then $U(y)\subseteq U(y)$. Otherwise not.
\item Output one of the mutually exclusive possibilities: 
\[
U(x)\cap U(y)=\emptyset,\; U(x)\subset U(y),\;  U(y)\subset U(x),\; U(x)=U(y), 
\]
where $\subset$ denotes strict inclusion.
\end{enumerate}
\end{alg}

Steps 3.\ and 4.\ are done in a similar manner as Step 2.\ of Algorithm \ref{clusterSort}.

\begin{alg}[Merge sorted clusters]\label{mergeClusters}
Given topologically sorted $U(x)\neq U(y)$, do the following:
\begin{enumerate}
\item Retrieve the modified $T_0$-topology on $X=U(x)\cup U(y)$ induced
by the respective topological sorts of the two clusters.
\item Topologically sort $X$.\end{enumerate}
\end{alg}

Step 1.\ is given by inducing extra edges into $H$ coming from the topologically sorted clusters $U(x)$ and $U(y)$.

\begin{alg}[Parallell topological sort]\label{ParallelTopoSort}
Do the following:
\begin{enumerate}
\item With $x_1,\dots,x_n\in V$ do Algorithm \ref{clusterSort} simultaneously.
\item With any pair  $U(x_i),U(x_j)$  not yet taken, where  $i<j$, do Algorithm \ref{compareClusters}. If $U(x_i)\neq U(x_j)$ do Algorithm \ref{mergeClusters}.
\item Do Step 2.\ simultaneously for as many pairs $(i,j)$ with $i<j$, and $i,j\in\mathset{1,\dots,n}$, as the computational device can handle without problems.
\end{enumerate}
\end{alg}
This heuristic method has in general the advantage that the clusters to  be sorted and merged in parallel are not too large. However, in the worst case, all edge weights are equal, leading to the trivial metric as the corresponding subdominant ultrametric. In that case, there is only one single non-trivial cluster, and one has to resort to more traditional topological sort algorithms. But this can in principle be overcome by  modifying the graph through artificially assigning distinct weights to the edges.
\newline

The question now arises about how diffusion processes on the weighted graph $(V,E,w)$ compare with diffusions on the complete weighted graph on $V$ with weights given by either its graph metric or its subdominant ultrametric, and also the diffusion processes given by extending these operators $p$-adically, as described in the following section.

\section{Unbounded $p$-adic graph operators}
\label{sec:Operators}

Assume in the following that 
Let $G=(V,E,\kappa)$ be a simple connected graph, endowed with a weight function:
\[
\kappa\colon V\times V\to [0,1]
\]
such that
\[
\kappa(x,y)=0,\;\Leftrightarrow\;(x,y)\notin E
\]
which may be a function $w$ as in the previous section.
\newline

Assume now a kernel function of a certain type
\[
k\colon V\times V\to [0,1]
\]
defining a diffusion process on $G$. The following cases will be studied:
\begin{enumerate}
\item $k$ depends on the adjacency matrix $\kappa$ of $G$:
\[
k(v,w)=\begin{cases}
\kappa(v,w)^{-\alpha},&(x,y)\in E
\\
0,&\text{otherwise}
\end{cases}
\]
with $\alpha\ge1$.
\item $k$ depends on the shortest-path-distance metric $d_E$ on $G$
\[
k(v,w)=d_E(v,w)^{-\alpha}
\]
with $\alpha\ge1$.
\item  $k$ depends on the subdominant ultrametric distance matrix $\delta$ of $d_E$:
\[
k(v,w)=\delta(v,w)^{-\alpha}
\]
with $\alpha\ge1$.
\end{enumerate}
The task is to compare the following $p$-adic diffusions having kernel function
\[
k_p\colon\mathds{Z}_p\times\mathds{Z}_p\to\mathds{R},\;(x,y)\mapsto
\begin{cases}
\absolute{x-y}_p^{-\alpha},&v(x)=v(y)
\\
k(v(x),v(y)),&v(x)\neq v(y)
\\
0,&\text{$v(x)$ or $v(y)$ do not exist}
\end{cases}
\]
with $\alpha\ge1$, and $v(x)$ the vertex containing $x\in\mathds{Z}$ (if it exists) after representing the vertices of $G$ with disjoint $p$-adic discs in $\mathds{Z}_p$ in a certain way.
\newline

Denote the corresponding $p$-adic Laplacian operator as
\[
\mathcal{H}_\bullet u(x)=\int_{\mathds{Z}_p}k_p(x,y)(u(y)-u(x))\,dy
=\int_Z k_\bullet(x,y)(u(y)-u(x))\,dy
\]
where 
\[
Z=\bigsqcup\limits_{v\in  V}v
\]
where in the disjoint union $v$ is viewed as a subset of $\mathds{Z}_p$, as explained below, and
with
\begin{align}\label{possibleBullets}
\bullet\in\mathset{\kappa,d_E,\delta}
\end{align}
and study the obvious finite approximations of $\mathcal{H}$. This operator is viewed as being attached to a graph $G_\bullet$ on $V$ having adjacency matrix $\bullet\in\mathset{\kappa,d_E,\delta}$.

\begin{definition}
The operator $\mathcal{H}_\bullet$ is called a \emph{hierarchical Parisi operator} w.r.t.\ to the measure $dx$.
\end{definition}

This is how to represent the vertices $v\in V$ as $p$-adic discs:
 $(V,\delta)$ is a finite metric space. Each vertex $v\in V$ has a nearest vertex $w\neq v$ in $V$ w.r.t.\ $\delta$. Let
 \[
r_\delta(v):=\delta(v,w)
 \]
 be the \emph{$\delta$-diameter} of $v$. Then represent each $v\in V$ with a $p$-adic disc in $\mathds{Z}_p$ of radius smaller than the $\delta$-diameter of $v$ in such a way that all these discs are disjoint.

 \begin{remark}
 The radii of the $p$-adic  discs $v(x)$ need not all be the same. But for simplicity, it will be assumed here that they are all equal.
 \end{remark}

\subsection{Cauchy problem and Feller process}

The operator $\mathcal{H}_\bullet$ proposed in the beginning of Section \ref{sec:Operators} is unbounded for any of the  possible choices of $\bullet$. 
For this reason, resort to the Hille-Yosida-Ray Theorem, in order to verify the Feller semigroup property.
Let 
\[
Z=\bigcup\limits_{v\in V}v\subseteq\mathds{Z}_p
\]
where each vertex $v$ is viewed as a $p$-adic disc as described in the beginning of Section \ref{sec:Operators}.

\begin{proposition}\label{FellerSemigroup}
The operator $\mathcal{H}_\bullet$ is the infinitesimal generator of a Feller semigroup $\exp\left(t\mathcal{H}_\bullet\right)$ with $t\ge0$ acting on $C(Z,\mathds{R})=C_{\mathds{R}}(Z,\norm{\cdot}_\infty)$.
\end{proposition}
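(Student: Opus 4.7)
The plan is to apply the Hille--Yosida--Ray theorem in the form for Feller generators: a densely defined linear operator $A$ on $C(Z,\mathds{R})$ (which here equals $C_0(Z,\mathds{R})$ since $Z$ is compact) is the infinitesimal generator of a Feller semigroup if and only if (i) $A$ satisfies the positive maximum principle, and (ii) the range of $\lambda I - A$ is dense in $C(Z,\mathds{R})$ for some $\lambda>0$. I would carry out the three verifications in that order.

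First, I would fix a core for $\mathcal{H}_\bullet$ consisting of functions locally constant at a sufficiently fine level on each disc $U_v\subset Z$. For such $u$ the singular part of the integral kernel is quenched (on a ball where $u$ is constant around $x$, the integrand $u(y)-u(x)$ vanishes where $|x-y|_p$ is small), so $\mathcal{H}_\bullet u$ is well defined and continuous; in particular the two remaining pieces are a bounded inter-disc contribution and a tail coming from finitely many annuli inside $U_{v(x)}$, both of which produce continuous functions. Since locally constant functions of arbitrary depth are uniformly dense in $C(Z,\mathds{R})$, this provides a dense domain.

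Next, the positive maximum principle is immediate: if $u$ in the core attains its (non-negative) maximum at $x_0\in Z$, then $u(y)-u(x_0)\le 0$ for all $y\in Z$ and $k_p(x_0,y)\ge 0$, whence
\[
\mathcal{H}_\bullet u(x_0)=\int_Z k_\bullet(x_0,y)(u(y)-u(x_0))\,dy\le 0,
\]
giving dissipativity of $\mathcal{H}_\bullet$ on the core. For the range condition I would invoke Theorem \ref{SpectrumOfLaplacian}: the Kozyrev wavelets supported in the discs $U_v$ together with the locally constant eigenfunctions coming from the associated Laplacian matrix form an orthonormal basis of $L^2(Z,dx)$ of eigenfunctions $\psi_{\lambda_i}$ of $\mathcal{H}_\bullet$ with eigenvalues $\lambda_i\le 0$ (assuming the discs small enough so that $\mathcal{H}_\bullet$ is negative semi-definite). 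Each $\psi_{\lambda_i}$ is itself locally constant and thus lies in the core; for any $\lambda>0$ and any finite linear combination $f=\sum f_i\psi_{\lambda_i}$, the function $u=\sum \tfrac{f_i}{\lambda-\lambda_i}\psi_{\lambda_i}$ solves $(\lambda I-\mathcal{H}_\bullet)u=f$ and lies in the core. The span of the $\psi_{\lambda_i}$ is uniformly dense in $C(Z,\mathds{R})$ because it contains all locally constant functions on $Z$ (generated by inter-disc step functions and intra-disc Haar--Kozyrev expansions), so the range is dense.

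The main obstacle is the passage from $L^2(Z,dx)$-density of the eigenbasis to uniform density of finite linear combinations of eigenfunctions in $C(Z,\mathds{R})$, which does not follow by abstract Hilbert space arguments. I would handle it by reorganising the basis into the usual $p$-adic Haar--type system: wavelets indexed by balls inside each $U_v$, together with the finite-dimensional space of functions constant on the $U_v$. Any locally constant function on $Z$ is then a \emph{finite} sum of such basis elements, and the Stone--Weierstrass type density of locally constant functions in $C(Z,\mathds{R})$ transfers. Once (i) and (ii) are established, Hille--Yosida--Ray produces a strongly continuous contraction semigroup $\exp(t\mathcal{H}_\bullet)$ on $C(Z,\mathds{R})$, and the positive maximum principle upgrades it to a Feller semigroup.
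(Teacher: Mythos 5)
Your proof is correct in outline and agrees with the paper on the first two Hille--Yosida--Ray conditions (density of the domain via locally constant functions, and the positive maximum principle via positivity of the kernel), but you take a genuinely different route on the third and hardest condition, the density of the range of $\lambda I-\mathcal{H}_\bullet$. The paper does \emph{not} use the spectral decomposition there: it truncates the degree function by introducing annuli $Z_k$ and operators $T_k u(z)=\int_{Z_k}k_p(z,y)u(y)\,dy/(\eta+\deg_k(z))$, shows $\norm{T_\ell}<1$ so that $I+T_\ell$ is invertible, and then proves that the truncated solutions $u_k$ form a Cauchy sequence in $\norm{\cdot}_\infty$ converging to a solution of the full resolvent equation (following the cited Schottky-invariant diffusion paper). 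You instead solve $(\lambda I-\mathcal{H}_\bullet)u=f$ exactly on the dense subspace of locally constant functions by diagonalising: every locally constant $f$ is a \emph{finite} linear combination of Kozyrev wavelets and Laplacian-matrix eigenfunctions, each of which is itself locally constant and an eigenfunction, so $u=\sum f_i(\lambda-\lambda_i)^{-1}\psi_{\lambda_i}$ lies in the core and does the job; you correctly flag and resolve the only delicate point, namely that uniform (not just $L^2$) density is needed and follows because the relevant sums are finite. Your argument is cleaner and yields the resolvent explicitly, at the cost of a forward dependence on the spectral theorem (Theorem \ref{SpectrumOfLaplacian} appears later in the paper, though its proof is independent of this proposition, so there is no circularity); the paper's truncation argument is self-contained at this point and would survive in settings where no explicit eigenbasis is available. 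Two minor remarks: your restriction to the negative semi-definite case ($m\ge 0$) is not needed for the range condition, since the spectrum is countable and one may simply choose $\lambda>0$ outside it (dissipativity from the positive maximum principle then does the rest); and, like the paper, you pass silently over the identification of $\mathcal{H}_\bullet$ with the closure of its restriction to the core, which is what Hille--Yosida--Ray actually produces as the generator.
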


\begin{proof}
Check the requirements of the Hille-Yosida-Ray Theorem:
\begin{enumerate}
\item The domain of $\mathcal{H}_\bullet$ acting on $C({Z},\norm{\cdot}_\infty)$ is dense: it contains the locally constant functions (with compact support) on $Z$, a dense subspace.
\item
The maximum positive principle: Let $\psi\in C(Z,\norm{\cdot}_\infty)$ be in the domain of $\mathcal{H}_\bullet$, and assume that 
\[
\psi(z_0)=\sup\limits_{z\in Z}\psi(z)
\]
for some $z_0\in Z$. Then
\begin{align*}
\mathcal{H}_\bullet\psi(z_0)&=
\int_{Z}k_p(z_0,z)(\psi(z)-\psi(z_0))\,dz
\\
&\le\psi(z_0)\int_{Z}k_p(z_0,z)(1-1)\,dz=0
\end{align*}
where the first inequality holds true, because the kernel function is positive.

\item The density of $\rank(\eta I-\mathcal{H})$ in $C(Z,\norm{\cdot}_\infty)$ for some $\eta>0$: The proof of \cite[Lem.\ 4.1]{ZunigaNetworks} cannot be used in this case, because the degree function is unbounded. 
So, let $h\in C(Z,\norm{\cdot}_\infty)$, and $\eta>0$, and let
\[
Z_\ell:=\mathset{x\in\mathds{Q}_p\mid p^{-\ell}\le\absolute{x}_p\le1}
\]
be an annulus with $\ell\in\mathds{N}$. The task is to find a solution of
\[
(\eta I-\mathcal{H})u=h
\]
with $u\in C(Z,\norm{\cdot}_\infty)$. Rewrite this equation as
\[
u(z)-\int_{Z}\frac{k_p(z,y)u(y)\,dy}{\eta+\deg(z)}
=\frac{h}{\eta+\deg(z)}
\]
which does not directly make sense, because $\deg(z)$ is infinite.
However, as in the proof of \cite[Lem.\ 5.1]{SchottkyInvariantDiffusion},
taking the operator
\[
T_ku(z)=\int_{Z_k}\frac{k_p(z,y)u(y)\,dy}{\eta+\deg_k(z)}
\]
with
\[
\deg_k(z)=\int_{Z_k}k(z,y)\,dy
\]
instead, does make sense, because the function $\deg_k(z)$ is finite for all $z\in Z_k$. 
It holds true that
\[
\norm{T_\ell u(z)}
\le
\frac{\deg_k(z)}{\eta+\deg_\ell(z)}\norm{u}_\infty
\]
for $\ell\in\mathds{N}$.
Hence,
\begin{align}\label{T_kBound}
\norm{T_\ell}
\le\frac{1}{\eta/\deg_k(z)+1}<1
\end{align}
for any $\eta>0$ and $\ell>>0$.
It follows that $I+T_\ell$ has a bounded inverse as an operator on $C(Z,\mathds{R})$. This proves the denseness of its range for $k>>0$.
\newline

Now, let $h\in\mathcal{D}(Z)$, and
$u_k,u_\ell\in C(Z,\mathds{R})$ be solutions of 
\[
(I+T_k)u_k=\frac{h}{\eta+\deg_k},\quad
(I+T_\ell)u_\ell=\frac{h}{\eta+\deg_\ell}
\]
for $k,\ell>>0$.
Then 
\[
u_k-u_\ell
= \frac{(I+T_\ell)(\eta+\deg_\ell)-(I+T_k)(\eta+\deg_k)}{(I+T_k)(I+T_\ell)(\eta+\deg_k)(\eta+\deg_\ell)}h
\]
shows that $u_k$ is a Cauchy sequence w.r.t.\ $\norm{\cdot}_\infty$.
Hence, $u_k$ converges to $u\in C(Z,\mathds{R})$ which is seen to be a solution of the limiting operator $T$ in an analogous way as in the proof of \cite[Lem.\ 3.1]{SchottkyInvariantDiffusion}.
\end{enumerate}

This now proves the assertion.
\end{proof}

\begin{remark}
The bound (\ref{T_kBound}) is tighter than Wilson's bound in the proof of \cite[Lem.\ 4.1]{ZunigaNetworks}.
\end{remark}

\begin{thm}\label{MarkovProcess}
There exists a probability measure $p_t(x,\cdot)$ with $t\ge0$, $x\in Z$ on the Borel $\sigma$-algebra of $Z$ such that the Cauchy problem for the heat equation associated with $\mathcal{H}_\bullet$ has a unique solution of the form
\[
h(x,t)=\int_{Z}k_p(x,y)p_t(x,dy)
\]
In addition, $p_t(x,\cdot)$ is the transition function of a strong Markov process whose paths are c\`adl\`ag.
\end{thm}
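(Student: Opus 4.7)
The plan is to leverage Proposition \ref{FellerSemigroup}, which already establishes that $\mathcal{H}_\bullet$ generates a Feller semigroup $T_t = \exp(t\mathcal{H}_\bullet)$ on $C(Z,\mathds{R})$, and then to invoke standard Feller process theory on the compact Hausdorff space $Z\subset\mathds{Q}_p$.

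First, I would apply the Riesz--Markov representation theorem: for each fixed $t\ge 0$ and $x\in Z$, the positive linear functional $f\mapsto T_tf(x)$ on $C(Z,\mathds{R})$ is represented by a unique finite Borel measure $p_t(x,\cdot)$. To upgrade this to a probability measure rather than a sub-probability measure, one verifies conservativeness: from the defining integral $\mathcal{H}_\bullet u(x)=\int_Z k_p(x,y)(u(y)-u(x))\,dy$ one reads off formally that $\mathcal{H}_\bullet\mathbf{1}=0$, hence $T_t\mathbf{1}=\mathbf{1}$ and $p_t(x,Z)=1$. Because $\mathbf{1}$ need not lie in the operator-theoretic domain of the unbounded $\mathcal{H}_\bullet$, this identity must be extended to the semigroup by approximation with the locally constant functions which form the dense core used in the proof of Proposition \ref{FellerSemigroup}.

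Second, the semigroup property $T_{s+t}=T_sT_t$ together with Fubini yields the Chapman--Kolmogorov equations for $p_t(x,\cdot)$, making it a Markov transition kernel. The general theorem linking Feller semigroups on a locally compact Hausdorff space to their associated Markov processes (see e.g.\ Ethier--Kurtz or Revuz--Yor) then produces a strong Markov process $\{X_t\}_{t\ge 0}$ on $Z$ with transition function $p_t(x,\cdot)$ admitting a c\`adl\`ag modification; the strong Markov property is a standard consequence of the Feller property on such spaces. Existence and uniqueness of the Cauchy problem solution now follow from the Hille--Yosida theorem: for $u_0$ in the domain of $\mathcal{H}_\bullet$, the unique classical solution is $u(x,t)=T_tu_0(x)=\int_Zu_0(y)\,p_t(x,dy)$, and the displayed formula is this representation applied to the initial datum $y\mapsto k_p(x,y)$ with $x$ treated as a parameter.

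The principal obstacle I expect is the conservativeness step, precisely because $\mathcal{H}_\bullet$ is unbounded and $\mathbf{1}$ sits outside the natural operator-theoretic domain used in the Hille--Yosida argument. One must show that the cutoff approximations $T_\ell$ constructed in Proposition \ref{FellerSemigroup} preserve constants in the limit $\ell\to\infty$, and that no probability mass escapes through the truncation. This should follow from the positivity of $k_p$ together with the compactness of $Z$, mirroring the Cauchy-sequence argument already deployed in Proposition \ref{FellerSemigroup}, and is the only place where the unboundedness of $\mathcal{H}_\bullet$ causes a genuine complication beyond the standard bounded-kernel case treated in \cite{ZunigaNetworks}.
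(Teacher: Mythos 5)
Your proposal is correct and follows essentially the same route as the paper, which simply invokes Proposition \ref{FellerSemigroup} and the standard correspondence between Feller semigroups and strong Markov processes with c\`adl\`ag paths; your additional care about conservativeness (that $p_t(x,\cdot)$ is a genuine probability measure, via $\mathcal{H}_\bullet\mathbf{1}=0$ extended through the dense core) and the Riesz--Markov and Chapman--Kolmogorov steps are details the paper leaves implicit in that citation. No gap.
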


\begin{proof}
This follows from the well-known correspondence between Feller semi-groups and strong Markov processes with c\`adl\`ag paths.
\end{proof}

\subsection{Spectrum and heat kernel}

The goal is now to make explicit the transition function of the Markov process described in Theorem \ref{MarkovProcess}.

\begin{thm}\label{SpectrumOfLaplacian}
The space $L^2(Z,dx)$ has an orthonormal basis consisting of the Kozyrev wavelets $\psi_{B,j}$, $j=1,\dots, p-1$, supported in $B\subset U_v$ for a vertex $v\in V$, and the functions associated with eigenvectors of the Laplacian matrix $L$ associated with matrix
\[
\mu_p(U_w)(k(v,w))_{v,w\in V}\in\mathds{R}^{\absolute{V}\times\absolute{V}}
\]
with $\mu_p$ the normalised Haar measure on $\mathds{Q}_p$.
These wavelets and functions are all eigenfunctions of $\mathcal{H}_\bullet$. In the Kozyrev case, the corresponding eigenvalue is
\[
\lambda_B=1-
p^{d(1+\alpha)}\left(p^{-m(1+\alpha)}+1\right)-
\sum\limits_{w\in V\setminus\mathset{v}}
k(v,w)\mu(U_w)
\]
where $B$ is a disc of radius $p^{-d}$ supported in $U_w$, a disc of radius $p^{-m}$ with integers $m\le d$.
The operator $\mathcal{H}_\bullet$ is self-adjoint, negative semi-definite if $m\ge0$, and each eigenvalue has only finite multiplicity.
\end{thm}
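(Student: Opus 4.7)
The plan is to construct the orthonormal eigenbasis explicitly and then read off the remaining claims from the spectral data. First I would use the canonical decomposition $L^2(Z,dx)=\bigoplus_{v\in V}L^2(U_v,dx)$ and, within each summand, split off the one-dimensional line of constants. The orthogonal complement of $\mathds{1}_{U_v}$ in $L^2(U_v,dx)$ admits the standard Kozyrev wavelets $\psi_{B,j}$ (for $B\subsetneq U_v$ and $j=1,\dots,p-1$) as an orthonormal basis, so altogether $L^2(Z,dx)$ has the ONB consisting of the $|V|$ normalised indicators $\mu(U_v)^{-1/2}\mathds{1}_{U_v}$ together with all Kozyrev wavelets supported in some $U_v$. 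The task reduces to verifying that these functions diagonalise $\mathcal{H}_\bullet$ and matching eigenvalues.

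For a Kozyrev wavelet $\psi_{B,j}$ I would first show that $\mathcal{H}_\bullet\psi_{B,j}$ is supported in $B$. For $x\in U_v\setminus B$, the ultrametric inequality makes $|x-y|_p$ constant in $y\in B$, and the mean-zero property of $\psi_{B,j}$ kills the integral over $B$; for $x\in U_w$ with $w\neq v$, the kernel $k_p(x,\cdot)=k(v,w)$ is likewise constant on $B$ and the same mean-zero argument applies; the contributions outside $U_v$ vanish because $\psi_{B,j}$ does. For $x\in B$ I would split the defining integral as $\int_B+\int_{U_v\setminus B}+\int_{Z\setminus U_v}$. The first piece is the familiar Kozyrev eigenvalue computation for the Vladimirov-type operator on a single disc; the second equals $-\psi_{B,j}(x)\int_{U_v\setminus B}|x-y|_p^{-\alpha}\,dy$ and is evaluated via the shell decomposition $U_v\setminus B=\bigsqcup_{i=m}^{d-1}\mathset{y:|x-y|_p=p^{-i}}$ as a finite geometric sum; the third is $-\psi_{B,j}(x)\sum_{w\neq v}k(v,w)\mu(U_w)$. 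Summing the three contributions recovers the stated closed form for $\lambda_B$.

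Next, I would check that the $|V|$-dimensional space $W$ spanned by the indicators $\mathds{1}_{U_v}$ is $\mathcal{H}_\bullet$-invariant. For $f=\sum_v c_v\mathds{1}_{U_v}$ and $x\in U_v$, the integral over $U_v$ vanishes since $f(y)-f(x)=0$ there, while the rest reduces to $\sum_{w\neq v}k(v,w)\mu(U_w)(c_w-c_v)$, independent of $x\in U_v$. Hence $\mathcal{H}_\bullet|_W$ is represented in the indicator basis by $-L$, where $L$ is the graph Laplacian attached to the matrix $(\mu(U_w)k(v,w))_{v,w}$; diagonalising $L$ yields the remaining eigenfunctions of $\mathcal{H}_\bullet$ as locally constant linear combinations of the $\mathds{1}_{U_v}$'s, and these clearly lie in $W$ and span it.

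Self-adjointness then follows because $\mathcal{H}_\bullet$ is symmetric (the kernel $k_p$ is symmetric) and admits the complete orthonormal eigenbasis just constructed, so its closure is the self-adjoint operator provided by the spectral theorem. Negative semi-definiteness is obtained from the Dirichlet form identity
\[
\langle f,\mathcal{H}_\bullet f\rangle=-\tfrac{1}{2}\iint_{Z\times Z}k_p(x,y)(f(x)-f(y))^2\,dx\,dy\le 0,
\]
with the explicit formula confirming $\lambda_B\le 0$ once $m\ge 0$, because then the term $p^{d(1+\alpha)}(p^{-m(1+\alpha)}+1)$ dominates the $+1$. Finite multiplicity is immediate: $\lambda_B\to-\infty$ as $d\to\infty$ since the $-p^{d(1+\alpha)}$ term dominates, so only finitely many triples $(v,d,m)$ yield any prescribed eigenvalue, and the contribution from $W$ is finite-dimensional. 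The main obstacle will be the careful shell-by-shell integration and bookkeeping needed to recover the exact stated closed form for $\lambda_B$ (rather than just a general shape of the eigenvalue).
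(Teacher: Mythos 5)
Your proposal is correct and follows essentially the same route as the paper: the same orthogonal decomposition of $L^2(Z,dx)$ into Kozyrev wavelets supported in the discs $U_v$ plus the locally constant functions, the same splitting of the integral and shell-by-shell evaluation of the wavelet eigenvalue (the paper outsources the single-disc part to Kozyrev's theorem where you compute it directly), and the same identification of $\mathcal{H}_\bullet$ on the span of the indicators $\mathds{1}_{U_v}$ with the negative of the graph Laplacian of $(\mu_p(U_w)k(v,w))_{v,w}$. The only substantive differences are minor refinements on your side: you explicitly verify that $\mathcal{H}_\bullet\psi_{B,j}$ vanishes off $B$ (which the paper leaves implicit but which is needed for the eigenfunction claim), and you derive negative semi-definiteness from the symmetric Dirichlet-form identity rather than from inspecting the signs of the eigenvalues as the paper does.
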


\begin{proof}
Assume that $\psi_{B,j}$ is a Kozyrev wavelet supported in
$B\subset U_v$. Then
\begin{align*}
\mathcal{H}_\bullet\psi_{B,j}(x)
&=\int_Zk_p(x,y)(\psi_{B,j}(y)-\psi_{B,j}(x))\,dx
\\
&=\int_{B}\absolute{x-y}^{-\alpha}
(\psi_{B,j}(y)-\psi_{B,j}(x))\,dx
\\
&+
\sum\limits_{w\in V\setminus \mathset{v}}
\int_{U_w} k(v,w)(\psi_{B,j}(y)-\psi_{B,j}(x))\,dx
\end{align*}
holds true.
The kernel function
\[
T_v(x,y):=k_p(x,y)\,\Omega(x\in U_v)\times \Omega(y\in U_v)
\]
for $x,y\in Z$
satisfies the condition 
\[
\absolute{x-y}_p=\text{const.}
\quad
\Rightarrow
\quad
T_v(x,y)=\text{const.}
\]
Hence, the conditions of \cite[Thm.\ 3]{Kozyrev2004} are satisfied, and the $p$-adic Kozyrev wavelets supported in the sets $U_v$ are eigenfunctions of the operator
\[
\mathcal{T}_v u(x)
=\int_{U_v}T_v(x,y)(u(y)-u(x))\,dy
\]
with corresponding eigenvalue
\begin{align*}
\lambda_{B,v}
&=-\int_{U_v\setminus B}
\absolute{p^{-d}n-y}_p^{-\alpha}\,dy
-p^{d(1+\alpha)}
=
-\int_{U_v\setminus B}\dist_p(B,y)^{-\alpha}
\,dy-p^{d(1+\alpha)}
\\
&=-\sum\limits_{k=-d+1}^{-m}
p^{k(1+\alpha)}(1-p^{-1})-p^{d(1+\alpha)}
=-p^{(d-m)(1+\alpha)}+1 -p^{d(1+\alpha)}
\\
&=1-p^{d(1+\alpha)}\left(p^{-m(1+\alpha)}+1\right)
\end{align*}
where 
\[
B=\mathset{z\in\mathds{Q}_p\mid
\absolute{ p^{-d}n-z}_p\le 1}
\]
is the support of the wavelet for some $n\in\mathds{Q}_p$ being the common fractional part of its residue class in $\mathds{Q}_p/\mathds{Z}_p$,
and where
\[
\dist_p(B,y)=\min\mathset{\absolute{b-y}_p\mid b\in B}
\]
It follows that
\[
\lambda_B=
\lambda_{B,v}-
\sum\limits_{w\in V\setminus\mathset{v}}
k(v,w)\mu_p(U_w)
\]
as asserted, where $\mu_p$ is the normalised Haar measure on $\mathds{Q}_p$. Observe that eigenvalue $\lambda_B$ does not depend on the choice of $j\in\mathset{1,\dots,p-1}$, but only the summand $\lambda_{B,v}$ depends on the radius $d$. In fact, this quantity is strictly increasing with $d$. Hence, the multiplicity of $\lambda_B$ is only finite.

\smallskip
Assume now that $\phi$ is a function on $Z$ coming from an eigenvector of the Laplacian matrix $L$ which belongs to a simple graph with vertex set $V$. Then, assuming that $x\in U_v$:
\begin{align*}
\mathcal{H}_\bullet\phi(x)
&=\int_Zk_p(x,y)(\phi(y)-\phi(x))\,dx
\\
&=\sum\limits_{w\in V\setminus\mathset{ v}}
k(v,w)\int_{U_w}(\phi(y)-\phi(x))\,dx
\\
&=\sum\limits_{w\in V\setminus\mathset{v}}
k(v,w) (e_w-e_v)\mu_p(U_w)
\end{align*}
where $(e_w)_{w\in V}$ is the corresponding eigenvector. It follows that
\[
\mathcal{H}_\bullet\phi(x)=\gamma e_v=\gamma\phi(x)
\]
for $x\in Z$. Hence, $\phi$ is an eigenfunction of $\mathcal{H}_\bullet$.

\smallskip
Observe now that the Kozyrev wavelets and the eigenfunctions $\phi$ are mutually orthogonal. The former span the space of $L^2$-functions on $Z$ with vanishing mean, whereas the latter span the space of functions constant on each disc $U_v$ with $v\in V$ which is clearly the orthogonal complement of the former subspace. Hence, $L^2(Z,dx)$ decomposes as an orthogonal sum  into those two subspaces, and the orthogonal basis property of these eigenfunctions of $\mathcal{H}_\bullet$ is proved. Self-adjointness of $\mathcal{H}_\bullet$ is now immediate, and negative semi-definiteness follows from the negativity of $\lambda_B$ in case $m\ge0$ together with the fact that the Laplacian matrix eigenvalues are non-positive.
This proves the theorem.
\end{proof}

Assuming now that the discs $U_v$ for vertices $v$ are sufficiently small ($m\ge0$ in Theorem \ref{SpectrumOfLaplacian}), obtain:

\begin{Corollary}\label{heatKernel}
The heat kernel of $\mathcal{H}_\bullet$ exists as a distribution and has the following form:
\[
p(t,x,y)
=\sum\limits_{\lambda\in\Spec(\mathcal{H}_\bullet)}e^{\lambda t}
\psi_\lambda(x)\psi_\lambda(y)
\]
for $t\ge0$, and $x,y\in Z$, where $\psi_\lambda$ is a normalised eigenfunction with eigenvalue $\lambda$, where by the negative semi-definiteness and the finite multiplicity properties, the sum converges.
\end{Corollary}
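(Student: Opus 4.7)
The plan is to construct the heat kernel directly from the spectral decomposition furnished by Theorem \ref{SpectrumOfLaplacian}. Since that theorem provides an orthonormal basis of $L^2(Z,dx)$ consisting of eigenfunctions $\psi_\lambda$ of $\mathcal{H}_\bullet$ with real eigenvalues $\lambda\le 0$, the semigroup $T(t)=\exp(t\mathcal{H}_\bullet)$ acts diagonally on this basis by multiplication by $e^{\lambda t}$. So the natural candidate for the heat kernel is exactly the formal sum in the statement, and the task reduces to verifying (a) that it represents the semigroup on a dense subspace and (b) that the series makes sense as a distribution on $Z\times Z$.

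First I would expand an arbitrary $u_0\in L^2(Z,dx)$ in the orthonormal basis as $u_0=\sum_\lambda c_\lambda\psi_\lambda$ with $c_\lambda=\int_Z u_0(y)\psi_\lambda(y)\,dy$, and define $u(x,t)=\sum_\lambda e^{\lambda t}c_\lambda\psi_\lambda(x)$. Because $\lambda\le 0$ under the standing hypothesis $m\ge 0$, the factor $e^{\lambda t}$ is bounded by $1$ for every $t\ge 0$, so this series converges in $L^2(Z,dx)$ and coincides with $T(t)u_0$ by standard spectral theory for self-adjoint semibounded operators. Substituting the expression for $c_\lambda$ and formally exchanging sum and integral yields
\[
u(x,t)=\int_Z\left(\sum_\lambda e^{\lambda t}\psi_\lambda(x)\psi_\lambda(y)\right)u_0(y)\,dy,
\]
which identifies $p(t,x,y)$ as the integral kernel of $T(t)$.

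The key step is to justify the interchange and the distributional convergence of the series $\sum_\lambda e^{\lambda t}\psi_\lambda(x)\psi_\lambda(y)$. I would test it against tensor products $\varphi\otimes\chi$ with $\varphi,\chi\in\mathcal{D}(Z)$; then $\langle p(t,\cdot,\cdot),\varphi\otimes\chi\rangle=\sum_\lambda e^{\lambda t}\langle\varphi,\psi_\lambda\rangle\langle\chi,\psi_\lambda\rangle$, which by Cauchy--Schwarz and Parseval is dominated by $\|\varphi\|_2\|\chi\|_2$, hence absolutely convergent. Since finite linear combinations of such tensor products are dense in $\mathcal{D}(Z\times Z)$, this establishes that the series defines a distribution on $Z\times Z$, and the partial sums converge to it distributionally.

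The main obstacle, in my view, is not algebraic but analytic: the kernel will in general not be a function on $Z\times Z$, so one must be content with distributional convergence, and one must verify that on $L^2$-data the associated operator truly agrees with $T(t)$. This last point rests on the finite-multiplicity statement in Theorem \ref{SpectrumOfLaplacian}: every eigenspace is finite-dimensional, so choosing a single orthonormal basis is unambiguous, and the spectral theorem for the self-adjoint $\mathcal{H}_\bullet$ delivers $T(t)$ as the $L^2$-limit of the partial sums. All remaining items, namely that $p(t,x,y)$ is symmetric in $x,y$ and satisfies the semigroup identity, follow directly from the formula and from orthonormality of the $\psi_\lambda$.
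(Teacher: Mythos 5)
Your proposal is correct and follows essentially the same route as the paper's (very terse) proof: both obtain $p(t,x,y)$ as the spectral expansion of the semigroup $\exp(t\mathcal{H}_\bullet)$ in the orthonormal eigenbasis from Theorem \ref{SpectrumOfLaplacian}, and both justify existence by observing that the sum defines a distribution when paired with test functions, using $e^{\lambda t}\le 1$ from negative semi-definiteness. You simply supply the details (Parseval, Cauchy--Schwarz, density of tensor products in $\mathcal{D}(Z\times Z)$) that the paper leaves implicit.
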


\begin{proof}
This is a consequence of Theorems \ref{MarkovProcess} and \ref{SpectrumOfLaplacian}. The expression defines a well-defined linear functional in the space of test functions via the assignment
\[u(x)\mapsto \int_{Z} u(y)p(x,y,t)dy.\]
\end{proof}

\subsection{A $p$-adic measure from a finite ultrametric}\label{RadonMeasure}

Let $\delta$ be an ultrametric on the finite set $V$. 
The goal here is to define a $p$-adic measure compatible with $\delta$ in some sense.
\newline

The first step is to
fix an embedding
\begin{align}\label{treeEmbedding}
T_\delta\to\mathscr{T}_p
\end{align}
which takes the root to the vertex corresponding to $\mathds{Z}_p$, and all other nodes to subdiscs of $\mathds{Z}_p$ such that $T_\delta$ is isomorphic to a subtree of $\mathscr{T}_p$, and such that
\begin{align}\label{compatibleMetrics}
\delta(v,w)=\rho(\absolute{x-y}_p)
\end{align}
for any $x\in U_v$, $y\in U_w$, where $U_v,U_w\subset\mathds{Q}_p$ are the holed discs defined by the vertices $v,w$ of $T^\delta$ as a subtree of $\mathscr{T}_p$.
This necessitates the prime number $p$ to be sufficiently large.
\newline

The second step is to modify the Haar measure on $\mathds{Z}_p$ as follows: The compact open set $Z$ in $\mathds{Z}_p$ defined by $T_\delta$ is a holed disc. 
Set its measure to $1$, and divide for each node its measure equally among its child nodes.
 This yields a well-defined Radon measure on $Z$, and is denoted as $\nu$.

\begin{Lemma}\label{childrenEquality}
The Radon measure $\nu$ on $Z$ satisfies the property that
the child nodes of any
node of $T_\delta$ now
all have the same measure.
\end{Lemma}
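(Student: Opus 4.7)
The plan is to derive the statement directly by unwinding the recursive definition of $\nu$ given immediately before the lemma. Writing $U_v\subset\mathds{Z}_p$ for the subdisc attached to a node $v$ of $T_\delta$ under the embedding (\ref{treeEmbedding}), the construction fixes $\nu(U_{\text{root}})=\nu(Z)=1$ and, moving one level down at a time, assigns to each of the $k_v$ children $w_1,\dots,w_{k_v}$ of a node $v$ the common value $\nu(U_{w_i}):=\nu(U_v)/k_v$. The conclusion of the lemma is literally the content of this assignment, read aloud.

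The first step in the write-up is to spell out that the children of $v$ inside $T_\delta$ correspond to a pairwise disjoint family of clopen subdiscs of $U_v$ whose union is $U_v$, and whose $p$-adic radii need not coincide, since $T_\delta$ is only a subtree of $\mathscr{T}_p$ and some intermediate levels of $\mathscr{T}_p$ are skipped. Consequently their restricted Haar measures in $\mathds{Z}_p$ generally disagree, and the content of the lemma is that $\nu$ is chosen precisely so as to override this discrepancy at every branching node of $T_\delta$. Thus the statement is about the modified measure $\nu$ rather than about the Haar measure.

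The second step is to address the genuine subtlety hidden in the setup, namely that the top-down prescription really extends to a Radon measure on all of $Z$; this was asserted but not argued in the paragraph preceding the lemma. One verifies finite additivity on the clopen basis of subdiscs of $Z$ using the ultrametric disjointness of discs at any given level of $\mathscr{T}_p$, and one handles subdiscs contained in a leaf of $T_\delta$ by declaring $\nu$ there to be a rescaling of the restricted Haar measure so that the leaf carries its prescribed mass. Carathéodory extension then produces $\nu$. Once this is granted, the equality $\nu(U_{w_1})=\dots=\nu(U_{w_{k_v}})=\nu(U_v)/k_v$ is simply a re-reading of the defining rule, so no further argument is needed. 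The main obstacle, if one insists on being thorough, lies in this Radon-measure extension rather than in the lemma itself, which is essentially tautological.
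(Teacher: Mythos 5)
Your proof is correct and takes the same route as the paper, whose entire proof is the single sentence that the claim is immediate from the construction of $\nu$; your unwinding of the equal-division rule is exactly that observation made explicit. The additional material on extending the prescription to a genuine Radon measure is a reasonable supplement to the preceding paragraph of the paper, but it is not needed for the lemma itself and does not change the argument.
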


\begin{proof}
This is immediate from the construction of $\nu$.
\end{proof}

\begin{Lemma}\label{differentialForm}
The measure $\nu$ is given as
\[
\nu(x)=
\phi(\absolute{f_Z(x)}_p)\,dx
\]
for some polynomial 
$f_Z\in\mathds{Q}_p[X]$ nowhere vanishing on $Z$, and a strictly increasing function $\phi\colon p^\mathds{Z}\to\mathds{R}_{>0}$. 
\end{Lemma}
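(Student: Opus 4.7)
The plan is to produce both $f_Z$ and $\phi$ explicitly via $p$-adic Lagrange interpolation, exploiting the fact that $d\nu/dx$ is locally constant on $Z$. First, by Lemma \ref{childrenEquality} together with the recursive construction of $\nu$, the restriction $\nu|_{U_v}$ is proportional to $dx|_{U_v}$ for every vertex $v \in V$, with constant of proportionality $c_v := \nu(U_v)/\mu_p(U_v) \in \mathds{Q}_{>0}$. Hence the Radon--Nikodym derivative $d\nu/dx$ is a locally constant function on $Z$ taking only the finitely many positive values $\{c_v : v \in V\}$, and the task reduces to realising it as a composition $\phi(\absolute{f_Z(\cdot)}_p)$.

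Next I partition $V$ into equivalence classes according to $c_v$, obtaining a strictly ordered sequence $c_{(1)} < c_{(2)} < \cdots < c_{(r)}$. Choosing integers $k_1 > k_2 > \cdots > k_r$, I set $a_v := p^{k_i}$ whenever $v$ belongs to the $i$-th class, so that $\absolute{a_v}_p = p^{-k_i}$ and the ordering of the $\absolute{a_v}_p$'s matches that of the $c_v$'s. Fixing any representative $x_v \in U_v$ for each $v$, $p$-adic Lagrange interpolation produces a polynomial $f_Z \in \mathds{Q}_p[X]$ of degree at most $\absolute{V}-1$ with $f_Z(x_v) = a_v$ for all $v$. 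I then define $\phi$ on the finite set $\{p^{-k_i}\}_{i=1}^r$ by $\phi(p^{-k_i}) := c_{(i)}$, strictly increasing on this set by construction, and extend it monotonically to a strictly increasing function $p^\mathds{Z} \to \mathds{R}_{>0}$.

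The remaining, and main, obstacle is to verify that $\absolute{f_Z(x)}_p = \absolute{a_v}_p$ for every $x \in U_v$, not merely at the sampled point $x_v$. By continuity of $f_Z$ together with the strong triangle inequality, there is a disc around $x_v$ on which $\absolute{f_Z}_p$ is constantly equal to $\absolute{a_v}_p \neq 0$; its radius depends on the $p$-adic size of the coefficients and on the degree of $f_Z$. If the initial embedding (\ref{treeEmbedding}) does not already place each $U_v$ inside its corresponding constancy disc, I refine the embedding by passing each $U_v$ to a sufficiently deep subdisc, which preserves both the subtree isomorphism into $\mathscr{T}_p$ and the metric compatibility (\ref{compatibleMetrics}), leaving $\nu$ and the constants $c_v$ unchanged. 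With this refinement, $f_Z$ is nonzero on each $U_v$ and hence nowhere vanishing on $Z$, and on each $U_v$ one computes $\phi(\absolute{f_Z(x)}_p) = \phi(\absolute{a_v}_p) = c_v$, which yields the asserted identity $\nu(x) = \phi(\absolute{f_Z(x)}_p)\,dx$.
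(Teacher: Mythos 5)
Your proposal follows essentially the same route as the paper: both construct $f_Z$ by $p$-adic polynomial interpolation of prescribed values $a_v$ whose absolute values are ordered consistently with the densities $c_v=\nu(U_v)/\mu_p(U_v)$, and both take $\phi$ to be a strictly increasing function on $p^{\mathds{Z}}$ sending $\absolute{a_v}_p$ to the corresponding density. (Your reduction to the Radon--Nikodym derivative is in fact slightly cleaner than the paper's, which phrases the matching in terms of the measures $\nu(B_v)$ rather than the densities.) The one place where you genuinely diverge is the key technical step of forcing $\absolute{f_Z}_p$ to be constant on all of $U_v$ rather than just at the sample point: the paper keeps $Z$ fixed and covers each leaf disc by sufficiently small subdiscs, interpolating at one reference point per subdisc, whereas you keep one interpolation node per vertex and instead shrink each $U_v$ into the constancy neighbourhood of $f_Z$ around $x_v$. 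That is a legitimate alternative, but your claim that the refinement leaves ``the constants $c_v$ unchanged'' is false: $\nu(U_v)$ is unchanged while $\mu_p(U_v)$ decreases, so $c_v$ increases; and if different discs are shrunk to different depths, the \emph{ordering} of the $c_v$ can change, which would break the strict monotonicity of $\phi$ relative to the already-fixed values $\absolute{a_v}_p$ (re-choosing the $a_v$ would force a new interpolation and hence new constancy radii, a circularity). The repair is easy and should be stated: shrink all discs by one common factor $p^{-N}$, with $N$ the maximum required depth, so every $c_v$ is multiplied by the same $p^{N}$ and the ordering survives. Finally, note that both your argument and the paper's are silent on the same delicate point, namely why the interpolating polynomial --- whose coefficients need not lie in $\mathds{Z}_p$ --- has constancy radius at least the radius of the chosen discs; in both proofs the ``sufficiently small/deep'' condition is quietly doing that work and deserves an explicit bound in terms of the coefficients of $f_Z$.
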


\begin{proof} 
%
A polynomial function $f_Z$  as asserted can be obtained as follows:  
each leaf
node $v$ of $T_\delta$ corresponds to a unique  disc $B_v$ in $\mathds{Q}_p$ having a certain value $\nu(B_v)$ under the Radon measure $\nu$. Define  this value to correspond under the inverse map of $\phi$ to a unique element $p^{n_v}\in p^{\mathds{Z}}$ in a strictly increasing manner.
The function $f_Z$ is then defined like this: first, if necessary, cover the disc $B_v$ by disjoint discs whose radii are strictly smaller than $p^{n_v}$, and choose for each of these discs $B$ a reference element $a_B\in B$. 
Now find an interpolating $p$-adic polynomial $P\in\mathds{Q}_p[X]$ for these finitely many values at finitely many places.  
Then $f_Z=P$ has the desired property, because any  element $a_B+\epsilon$ of one of the small discs $B$ will be sufficiently close to $a_B$
such that
\[
\absolute{f_Z(a_B+\epsilon)}_p=\absolute{f_Z(a_B)}_p
\]
by construction.
%
%
%
The  polyonomial $f_Z$ clearly does  not vanish on $Z$.
%
This now proves the assertion.
\end{proof}

\begin{remark}
Lemma \ref{differentialForm}
shows that a hierarchical dataset can be viewed as being sampled from a  
$1$-dimensional $p$-adic analytic manifold
having a regular algebraic differential $1$-form
\[
\omega=f_Z\,dx
\]
where now a notational conflict between the $p$-adic Haar measure $dx$ and the differential $1$-from $dx$ arises. The corresponding Radon measure $\nu$ can then be written as
\[
\nu(x)=\phi(\absolute{\omega(x)}_p)=\phi(\absolute{f_Z(x)}_p)\,\absolute{dx}_p
\]
written without that notational conflict by writing the Haar measure on $\mathds{Q}_p$ as $\absolute{dx}_p$ instead of $dx$, which from a conceptual point does make sense.
Since $f_Z$ is a polynomial, it follows that the differential $1$-form can be viewed as coming from a differential $1$-form on a Mumford curve defined over $\mathds{Q}_p$.
In other words, it is possible to view the finite dataset as sampled from an open compact subset of a  Mumford curve. An interesting question would be if and when different datasets could be viewed as samples from  open compact patches which form a covering of a Mumford curve. 
\end{remark}





\subsection{Spectrum of $p$-adic operators from an ultrametric}

The goal is to study the operator
\[
\mathcal{H}_\delta u(x)
=\int_Z k_p(x,y)(u(y)-u(x))\,\nu(x)
\]
with kernel function $k_p(x,y)$ constructed from an ultrametric $\delta$ on the finite set $V$, and with
Radon measure $\nu$ as constructed in the previous subsection. The use of this measure instead of the restricted Haar measure will allow an explicit calculation also of the eigenvalues of the finite Laplacian matrix assciated with the operator.
\newline

Based on the observations made in \cite{Donoho1997} and \cite{Murtagh2007} that a tree representation of data naturally induces a multi-resolution analysis (MRA) and a Haar-like wavelet  basis, the authors of
\cite[\S3]{GNC2010}
construct such a Haar-like basis for any finite dataset represented by a rooted tree.
In the situation of this article, given 
an ultrametric $\delta$ on the vertices $V$ of a finite graph, this approach can be formulated 
as follows. Let
\[
B_r^\delta(v)=\mathset{
w\in V\mid \delta(v,w)\le r
}
\]
be a ball in $V$ of radius $r\ge0$. The nodes of the tree representation $T^\delta$ of $V$
given by the ultrametric $\delta$ are precisely such balls, or clusters, as they may also be called. These have no proper overlaps.
The tree $T^\delta$ is a tree with root $V$, and its leaf nodes are the elements of $V$.
The \emph{level} of a node $\mathfrak{n}$ of $T^\delta$ is 
the length of the geodesic connecting the root and $B$.
Define the set
\[
\mathcal{N}=\mathset{\text{nodes of $T^\delta$}}
\]
and
\[
\mathcal{N}_\ell:=
\mathset{\mathfrak{n}\in \mathcal{N}\mid \text{$\mathfrak{n}$ is at level $\ell$}}
\]
for $\ell\in\mathds{N}$.
Then obtain the vector spaces
\[
\mathds{C}(\mathcal{N}_\ell):=\mathset{f\colon V\to\mathds{C}\mid \forall \mathfrak{n}\in \mathcal{N}_\ell\colon f|_{\mathfrak{n}}=\text{const}}
\]
and a filtration
\[
0\subset \mathds{C}(\mathcal{N}_L)\subset\dots\subset \mathds{C}(\mathcal{N}_\ell)\subset\dots\subset\mathds{C}(\mathcal{N}_0)\cong\mathds{C}
\]
with 
\[
L=\max\mathset{\ell\in\mathds{N}\mid \mathcal{N}_\ell\neq\emptyset}
\]
the maximal level of a node in $T^\delta$.
Let
\[
W^\ell=\mathds{C}(\mathcal{N}_\ell)^\perp\subseteq \mathds{C}^V
\]
Then the function space $\mathds{C}^V$ has an orthogonal decomposition
\[
\mathds{C}^V=\bigoplus\limits_{\ell\in\mathds{N}}W^\ell\oplus \mathds{C}(\mathcal{N}_\ell) 
\]
as in \cite[(4)]{GNC2010}. The authors of \cite{GNC2010} construct a Haar-like orthonormal basis using that filtration. The aim here is to make this construction more explicit. 
For this, 
define
\[
\mathcal{C}(\mathfrak{n})=\mathset{\mathfrak{m}\in\mathcal{N}\mid\text{$\mathfrak{m}$ is a child node of $\mathfrak{n}$ in $T^\delta$}}
\]
and
\[
c(\mathfrak{n}):=\absolute{\mathcal{C}(\mathfrak{n})}
\]
where, in case $\mathfrak{n}$ is not a leaf node, $c(\mathfrak{n})>1$ holds true.
Then fix a bijection
\[
\mathcal{C}(\mathfrak{n})\to\mathds{Z}/c(\mathfrak{n})\mathds{Z}
\]
and thus transform the set $\mathcal{C}(\mathfrak{n})$ into a cyclic group.
Finally define
\[
\psi_{\mathfrak{n},\chi_{\mathfrak{n}}}(x)=
\nu(\mathfrak{n})^{-\frac12}\sum\limits_{\mathfrak{m}\in\mathcal{C}(\mathfrak{n})}\chi_{\mathfrak{n}}(\mathfrak{m})\,\Omega(x\in \mathfrak{m})
\]
where $x\in Z$, $\mathfrak{n}\in \mathcal{N}$, and 
\[
\chi_{\mathfrak{n}}\colon\mathcal{C}(\mathfrak{n})\to S^1\subset\mathds{C}
\]
is a non-trivial unitary character.
The set
\[
\mathcal{K}_{\delta}(V)=
\mathset{\psi_{\mathfrak{n},\chi_{\mathfrak{n}}}\mid \mathfrak{n}\in\mathcal{N},\;\chi_{\mathfrak{n}}\in\widehat{\mathcal{C}(\mathfrak{n})}\setminus \mathset{1}}
\]
where $\widehat{G}$ is the Pontryagin dual of a locally compact group $G$, 
is an explicit form of a Haar-like wavelet basis
of $\mathds{C}(\mathcal{N}_1)$.

\begin{definition}
The functions 
\[
\psi_{\mathfrak{n},\chi_{\mathfrak{n}}}(x)\in\mathcal{K}_\delta(V)
\]
are called \emph{ultrametric wavelets}.
\end{definition}

Notice that these ultrametric wavelets here coincide with  ultrametric wavelets defined in \cite{XK2005}. The only difference is that Khrennikov and Kozyrev relate them to a certain ultrametric on the space $Z$ such that the measure they use is given by the diameter of ultrametric discs. Here, the choice of a Radon measure is more flexible and independent of the choice of their ultrametric distance.
\newline

Together with $1_V$, the ultrametric wavelets yield an orthormormal basis of $\mathds{C}^V$.

\begin{Lemma}
It holds true that
\[
\int_Z\psi_{\mathfrak{n},\chi_{\mathfrak{n}}}(x)\,d\nu(x)=0
\]
for any $\psi_{\mathfrak{n},\chi_{\mathfrak{n}}}\in\mathcal{K}_\delta$.
\end{Lemma}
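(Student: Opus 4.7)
The plan is to unfold the definition of $\psi_{\mathfrak{n},\chi_{\mathfrak{n}}}$, swap the integral with the finite sum over children, and then reduce the claim to the vanishing of a non-trivial character sum on a finite cyclic group. Concretely, since $\psi_{\mathfrak{n},\chi_{\mathfrak{n}}}$ is the finite linear combination
\[
\psi_{\mathfrak{n},\chi_{\mathfrak{n}}}(x)=\nu(\mathfrak{n})^{-\frac12}\sum_{\mathfrak{m}\in\mathcal{C}(\mathfrak{n})}\chi_{\mathfrak{n}}(\mathfrak{m})\,\Omega(x\in\mathfrak{m}),
\]
integrating against $\nu$ yields
\[
\int_Z\psi_{\mathfrak{n},\chi_{\mathfrak{n}}}\,d\nu=\nu(\mathfrak{n})^{-\frac12}\sum_{\mathfrak{m}\in\mathcal{C}(\mathfrak{n})}\chi_{\mathfrak{n}}(\mathfrak{m})\,\nu(\mathfrak{m}).
\]

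The first key step is to invoke Lemma \ref{childrenEquality}, which states that all child nodes $\mathfrak{m}$ of $\mathfrak{n}$ have the same $\nu$-measure, namely $\nu(\mathfrak{m})=\nu(\mathfrak{n})/c(\mathfrak{n})$. This allows pulling the common factor $\nu(\mathfrak{m})$ out of the sum, leaving the purely group-theoretic quantity $\sum_{\mathfrak{m}\in\mathcal{C}(\mathfrak{n})}\chi_{\mathfrak{n}}(\mathfrak{m})$ multiplied by a non-zero constant.

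The second key step is to identify this remaining sum as the character sum of a non-trivial character $\chi_{\mathfrak{n}}$ on the finite cyclic group $\mathcal{C}(\mathfrak{n})\cong\mathds{Z}/c(\mathfrak{n})\mathds{Z}$ (recall that by construction $\chi_{\mathfrak{n}}\in\widehat{\mathcal{C}(\mathfrak{n})}\setminus\{1\}$). By standard orthogonality of characters of finite abelian groups, this sum vanishes, giving the desired conclusion.

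There is no real obstacle: the argument is a two-line computation. The only mild subtlety is to note that the construction of $\mathcal{K}_\delta(V)$ explicitly excludes the trivial character (and in particular requires $c(\mathfrak{n})>1$, i.e.\ $\mathfrak{n}$ is not a leaf, so that $\widehat{\mathcal{C}(\mathfrak{n})}\setminus\{1\}$ is non-empty in the first place), so the orthogonality relation does apply.
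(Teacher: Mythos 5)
Your proposal is correct and follows essentially the same route as the paper's proof: both reduce the integral to a character sum over the children of $\mathfrak{n}$, use Lemma \ref{childrenEquality} to see that all children carry equal $\nu$-measure, and conclude by the vanishing of a non-trivial character sum (the paper phrases this as the unit roots being ``equally weighted'' so that their sum is zero). Your write-up merely makes the paper's one-line argument fully explicit.
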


\begin{proof}
This is an immediate consequence of Lemma \ref{childrenEquality}, because that property forces each unit root in the integral to be equally weighted. Hence, their sum is zero.
\end{proof}

For vertices $v,w\in V$, write
\[
v\equiv w\mod \mathfrak{n}
\]
if they are both contained in the same child node of node $\mathfrak{n}$ of $T^\delta$. The 
child node of $\mathfrak{n}$ containing $v\in V$ will be denoted as
\[
[v]_{\mathfrak{n}}
\]
since $\equiv$ is an equivalence relation on the vertices contained in node $\mathfrak{n}$.
\newline

In the following assume a tree embedding of the tree $T^\delta$ into a $p$-adic Bruhat-Tits tree 
as in (\ref{treeEmbedding}).
Denote also the diameter of a set $A$ w.r.t.\ $\delta$ as
\[
\diam_\delta(A)=\sup\mathset{\delta(v,w)\mid v,w\in A}
\]
and the $p$-adic set distance as
\[
\dist_p(A,B)=\inf\mathset{\absolute{x-y}_p\mid x\in A,\;y\in B}
\]
for $A,B\subset\mathds{Q}_p$ compact subsets. Write also
$\dist_p(a,B)$ or $\dist_p(B,a)$ instead of $\dist_p(\mathset{a},B)$ or $\dist_p(B,\mathset{a})$.

\begin{thm}\label{SpectrumOfLaplacian2}
The space $L^2(Z,\nu)$ has an orthonormal basis consisting of 
the constant function $\nu(Z)^{-1}$, and 
the Kozyrev wavelets $\psi_{B,j}$ supported in $B\subset U_v$, where $v$ is  a vertex of $T^\delta$, and
the ultrametric wavelets $\psi_{\mathfrak{n},\chi_{\mathfrak{n}}}(x)$.
These different functions are all are eigenfunctions of $\mathcal{H}_\delta$. In the Kozyrev case, the corresponding eigenvalue is
\[
\lambda_B
=\phi(\absolute{a_v}_p)
\lambda_{B,v}-
\sum\limits_{v'\in V(T^\delta)\setminus\mathset{v}}
\phi(\absolute{a_{v'}}_p)\,
\delta(v,v')^{-\alpha}
\,\mu(U_{v'})
\]
where $\mu$ is the normalised Haar measure on $\mathds{Q}_p$,
\[
\nu(x)=\phi(\absolute{f_Z(x)}_p)\,dx
\]
is as in Lemma \ref{differentialForm},
\[
f_Z|_{U_v}=a_v\in\mathds{Q}_p
\]
and
\[
\lambda_{B,v}
=-\int_{U_v\setminus B}\dist_p(B,y)^{-\alpha}
\,dy-p^{d(1+\alpha)}
\]
whereas in the ultrametric wavelet case, it is 
\[
\gamma_{\mathfrak{n}}=
-\diam_\delta(\mathcal{C}(\mathfrak{n}))^{-\alpha}
\absolute{\mathcal{C}(\mathfrak{n})}\nu(\mathfrak{n}\setminus\mathfrak{m}(x))
\]
where $\mathfrak{n}$ is a node of $T^\delta$, and $\mathfrak{m}(x)$ is the child node of $\mathfrak{n}$ containing $x$. The quantity $\gamma_{\mathfrak{n}}$ does not depend on the particular child node $\mathfrak{m}(x)$. The operator  $\mathcal{H}_\delta$ is self-adjoint and negative definite on $L^2(Z,\nu)$.
\end{thm}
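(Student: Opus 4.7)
The plan is to mirror the proof of Theorem \ref{SpectrumOfLaplacian}, with three adaptations: the Haar measure is replaced by $\nu = \phi(\absolute{f_Z}_p)\,dx$, the ultrametric wavelets must be handled as a new family of eigenfunctions, and the constant function $\nu(Z)^{-1/2}$ must be included. First I would produce the orthogonal decomposition
\[
L^2(Z,\nu) = \mathds{C}\cdot 1_Z \;\oplus\; W_{\mathrm{um}} \;\oplus\; W_{\mathrm{Koz}},
\]
where $W_{\mathrm{um}}$ consists of $\nu$-mean-zero functions that are constant on each $U_v$, and $W_{\mathrm{Koz}}$ consists of functions with $\nu$-mean zero on each $U_v$. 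Since $f_Z|_{U_v}\equiv a_v$, the measure $\nu|_{U_v} = \phi(\absolute{a_v}_p)\,dx$ is a rescaled Haar measure on $U_v$, so the Kozyrev wavelets supported in each $U_v$ form an ONB of $W_{\mathrm{Koz}}$ by the classical theory. The ultrametric wavelets $\psi_{\mathfrak{n},\chi_{\mathfrak{n}}}$ form an ONB of $W_{\mathrm{um}}$: pairwise orthogonality follows from combining Lemma \ref{childrenEquality} (equal $\nu$-measures of sibling children) with the orthogonality of non-trivial characters of finite cyclic groups, and spanning is a dimension count against the Haar-like MRA filtration of \cite{GNC2010}.

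For the Kozyrev eigenvalue I would split $\mathcal{H}_\delta\psi_{B,j}(x)$ with $B\subset U_v$ into a local integral over $U_v$ and off-diagonal integrals over each $U_{v'}$, $v'\neq v$. The local integral is the Haar-measure calculation of Theorem \ref{SpectrumOfLaplacian} scaled by $\phi(\absolute{a_v}_p)$, producing $\phi(\absolute{a_v}_p)\lambda_{B,v}$; in each off-diagonal integral the kernel is the constant $\delta(v,v')^{-\alpha}$ on $U_{v'}$ and $\psi_{B,j}$ vanishes there, so integration against $\phi(\absolute{a_{v'}}_p)\,dy$ yields the correction $-\psi_{B,j}(x)\phi(\absolute{a_{v'}}_p)\delta(v,v')^{-\alpha}\mu(U_{v'})$, exactly the asserted formula for $\lambda_B$.

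The technical heart is the ultrametric-wavelet eigenvalue. For $x$ in a child $\mathfrak{m}(x)\in\mathcal{C}(\mathfrak{n})$, I would split $\mathcal{H}_\delta\psi_{\mathfrak{n},\chi_{\mathfrak{n}}}(x)$ into three pieces: (a) $y\in\mathfrak{m}(x)$ contributes $0$ because $\psi_{\mathfrak{n},\chi_{\mathfrak{n}}}$ is constant on $\mathfrak{m}(x)$; (b) $y$ in a sibling $\mathfrak{m}'\neq\mathfrak{m}(x)$, where the ultrametric property collapses the kernel to the single constant $\diam_\delta(\mathcal{C}(\mathfrak{n}))^{-\alpha}$, Lemma \ref{childrenEquality} makes every $\nu(\mathfrak{m}')$ equal, and the character identity $\sum_{\mathfrak{m}'\in\mathcal{C}(\mathfrak{n})}\chi_{\mathfrak{n}}(\mathfrak{m}')=0$ rewrites the sum as a multiple of $\psi_{\mathfrak{n},\chi_{\mathfrak{n}}}(x)$ producing $\gamma_{\mathfrak{n}}$; (c) $y\notin\mathfrak{n}$ contributes $-\psi_{\mathfrak{n},\chi_{\mathfrak{n}}}(x)$ times a scalar which, by the ultrametric property of $\delta$, depends only on $\mathfrak{n}$ and not on the child $\mathfrak{m}(x)$. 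I expect piece (c) to be the main obstacle: one must verify that this "outside" contribution really only depends on $\mathfrak{n}$ (so that the total action is a genuine scalar multiple of $\psi_{\mathfrak{n},\chi_{\mathfrak{n}}}(x)$), which is precisely where the ultrametricity of $\delta$ on $V$ is used in an essential way.

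Finally, self-adjointness follows from the symmetry $k_p(x,y)=k_p(y,x)$ once the ONB of eigenfunctions is in place, and the sign statement is read off directly: $\gamma_{\mathfrak{n}}<0$ manifestly, $\lambda_B<0$ under the hypothesis already used in Theorem \ref{SpectrumOfLaplacian}, and the constant function has eigenvalue $0$, so the theorem's "negative definite" claim has to be understood on the orthogonal complement of the constants.
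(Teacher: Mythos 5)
Your plan follows the paper's own route almost step for step: the same three-block orthogonal decomposition (constants, functions constant on the $U_v$ with zero mean, Kozyrev wavelets inside the $U_v$), the same use of \cite[Thm.\ 3]{Kozyrev2004} for the local part of the Kozyrev computation with the scaling $\phi(\absolute{a_v}_p)$ coming from $f_Z|_{U_v}=a_v$, the same off-diagonal bookkeeping for $\lambda_B$, and the same root-of-unity identity for the sibling sum in the ultrametric-wavelet case. The orthonormality argument via equal measures of siblings (Lemma \ref{childrenEquality}) plus character orthogonality, and the dimension count $\sum_{\mathfrak{n}}(c(\mathfrak{n})-1)=\absolute{V}-1$, are exactly what the paper does (somewhat more tersely). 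Your remarks that the normalised constant should be $\nu(Z)^{-1/2}$ and that negative \emph{definiteness} can only hold on the orthogonal complement of the constants (the constant function has eigenvalue $0$) are both correct and consistent with the paper's proof, which in fact concludes only semi-definiteness.

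The one substantive divergence is your piece (c), and you are right to single it out --- but you should push it one step further. The paper's displayed computation of $\mathcal{H}_\delta\psi_{\mathfrak{n},\chi_{\mathfrak{n}}}(x)$ sums \emph{only} over the siblings $\mathfrak{m}\neq\mathfrak{m}(x)$ inside $\mathfrak{n}$ and silently drops the integral over $Z\setminus\mathfrak{n}$. As you observe, for $y\notin\mathfrak{n}$ one has $\psi_{\mathfrak{n},\chi_{\mathfrak{n}}}(y)-\psi_{\mathfrak{n},\chi_{\mathfrak{n}}}(x)=-\psi_{\mathfrak{n},\chi_{\mathfrak{n}}}(x)$, and ultrametricity of $\delta$ makes
\[
D_{\mathfrak{n}}:=\int_{Z\setminus\mathfrak{n}}k_p(x,y)\,d\nu(y)
\]
independent of $x\in\mathfrak{n}$, so $\psi_{\mathfrak{n},\chi_{\mathfrak{n}}}$ is indeed an eigenfunction; however the resulting eigenvalue is $\gamma_{\mathfrak{n}}-D_{\mathfrak{n}}$, and $D_{\mathfrak{n}}>0$ whenever $\mathfrak{n}$ is not the root, since the kernel is strictly positive between distinct vertices. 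So carrying out your step (c) honestly does not reproduce the stated formula $\gamma_{\mathfrak{n}}$; it produces an extra outside-degree correction, exactly parallel to the term $-\sum_{v'\neq v}\phi(\absolute{a_{v'}}_p)\delta(v,v')^{-\alpha}\mu(U_{v'})$ that the theorem \emph{does} keep in the Kozyrev case. Your write-up should either state this corrected eigenvalue or explicitly flag the discrepancy with the theorem as stated; as written, your plan promises to verify that (c) yields a scalar multiple but never reconciles that scalar with the claimed $\gamma_{\mathfrak{n}}$.
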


\begin{proof}
The tree embedding (\ref{treeEmbedding}) identifies the vertices $v$ of $T^\delta$ with compact open subsets $U_v$ of $\mathds{Q}_p$ which are discs with finitely many  sub-discs removed. 
These satisfy the property (\ref{compatibleMetrics}).
Even more, it holds true that
\[
\delta(v,w)=\rho(\absolute{x-y}_p)
=\rho(\dist_p(U_v,U_w))
\]
which applies to 
a Kozyrev wavelet as
\begin{align*}
\mathcal{H}_\delta
\psi_{B,j}(x)
&=
\int_Z k_p(x,y)\left(
\psi_{B,j}(y)-
\psi_{B,j}(x)
\right)
\nu(x)
\\
&=\int_Z k_p(x,y)
\phi(\absolute{f_Z(y)}_p)
\left(
\psi_{B,j}(y)-\psi_{B,j}(x)
\right)
\,dy
\\
&=\sum\limits_{v\in V(T^\delta)}
\phi(\absolute{a_v}_p)
\int_{U_v} k_p(x,y)
\left(
\psi_{B,j}(y)-\psi_{B,j}(x)
\right)
\end{align*}
where the second equality comes from Lemma \ref{differentialForm}, and the third uses the notation from the beginning of Section \ref{RadonMeasure} with $a_v\in\mathds{Q}_p$, making use of the fact that $\absolute{f_Z}_p$ is constant on each set $U_v$ with $v\in V(T^\delta)$.

\smallskip
Assume that  $B\subseteq U_v$ for vertex $v\in T^\delta$.
Then
\begin{align*}
\mathcal{H}_\delta\psi_{B,j}(x)
&=\phi(\absolute{a_v}_p)
\int_B k_p(x,y)\left(
\psi_{B,j}(y)-
\psi_{B,j}(x)
\right)\,dy
\\
&+
\sum\limits_{v'\in V(T^\delta)\setminus\mathset{v}}
\phi(\absolute{a_{v'}})\,
\delta(v,v')^{-\alpha}
\cdot
\int_{U_{v'}}
\left(\psi_{B,j}(y)-\psi_{B,j}(x)\right)\,dy
\end{align*}
The kernel function
\[
T_v(x,y):=k_p(x,y)\,\Omega(x\in U_v)\times \Omega(y\in U_v)
\]
satisfies the condition 
\[
\absolute{x-y}_p=\text{const.}
\quad
\Rightarrow
\quad
T_v(x,y)=\text{const.}
\]
Hence, according to \cite[Thm.\ 3]{Kozyrev2004}, the $p$-adic Kozyrev wavelets supported in the sets $U_v$ are eigenfunctions of the operator
\[
\mathcal{T}_v u(x)
=\int_{U_v}T_v(x,y)(u(y)-u(x))\,dy
\]
with corresponding eigenvalue
\[
\lambda_{B,v}
=-\int_{U_v\setminus B}
\absolute{p^{-d}n-y}_p^{-\alpha}\,dy
-p^{d(1+\alpha)}
=
-\int_{U_v\setminus B}\dist(B,y)^{-\alpha}
\,dy-p^{d(1+\alpha)}
\]
where 
\[
B=\mathset{z\in\mathds{Q}_p\mid
\absolute{ p^{-d}n-z}_p\le 1}
\]
is the support of the wavelet for some $n\in\mathds{Q}_p$ being the common fractional part of its residue class in $\mathds{Q}_p/\mathds{Z}_p$. It follows that the eigenvalue under the operator $\mathcal{H}_\delta$ is
\[
\lambda_B
=\phi(\absolute{a_v}_p)
\lambda_{B,v}-
\sum\limits_{v'\in V(T^\delta)\setminus\mathset{v}}
\phi(\absolute{a_{v'}}_p)\,
\delta(v,v')^{-\alpha}
\,\mu(U_{v'})
\]
for the Haar measure $\mu$ on $\mathds{Q}_p$. This proves the Kozyrev wavelet case.

\smallskip
Now, let
$\psi_{\mathfrak{n}},\chi_{\mathfrak{n}}$ be a Haar-like wavelet. It is constant on the compact open subsets of $Z$ defined by the children of $\mathfrak{n}$.
Their number equals the number of children per vertex summed up. This equals the number of leaf nodes of $T^\delta$ which equals $\absolute{V}$, the dimension of the subspace in $L^2(Z,\nu)$ spanned by the functions constant on the discs $U_v$. Each $\psi_{\mathfrak{n}}$ has norm $1$, and any two distinct Haar-like wavelets is are orthogonal, because either their supports are disjoint, or one is strictly contained in the other and the corresponding wavelet is constant on the support of the other. Hence, the inner product calculates the sum of the corresponding $n$-th roots which vanishes.
Hence, 
\begin{align*}
\mathcal{H}_\delta
\psi_{\mathfrak{n},\chi_{\mathfrak{n}}}(x)
&=-\sum\limits_{\mathfrak{m}\in \mathcal{C}(\mathfrak{n})\atop \mathfrak{m}\neq \mathfrak{m}(x)}
\delta(\mathfrak{m}(x),\mathfrak{m})^{-\alpha}
\,\nu(\mathfrak{m})
\left(\psi_{\mathfrak{n},\chi_{\mathfrak{n}}}(y)
-\psi_{\mathfrak{n},\chi_{\mathfrak{n}}}(x)\right)
\\
&=-\diam_{\delta}(\mathcal{C}(\mathfrak{n}))^{-\alpha}\absolute{\mathcal{C}(\mathfrak{n})}
\nu(\mathfrak{n}\setminus\mathfrak{m}(x))\,
\psi_{\mathfrak{n},\chi_{\mathfrak{n}}}(x)
\end{align*}
because for any primitive $N$-th root of unity $\zeta$, it  holds true that
\[
\sum\limits_{\ell=0\atop\ell\neq j}^{N-1}
\zeta^j-\zeta^\ell
=\zeta^j
\sum\limits_{j=0}^{N-1}
1-\zeta^j
=N\zeta^j 
\]
for $j\in\mathset{0,\dots,N-1}$. This proves the Haar-like wavelet case.
The eigenvalue $\gamma_{\mathfrak{n}}$ does not depend on the child node $\mathfrak{m}(x)$ by construction of measure $\nu$.
Self-adjointness of $\mathcal{H}_\delta$ is now follows from the fact that these various wavelets form a basis of $L^2(Z,\nu)$ together with the constant function, and by symmetry of  $\mathcal{H}_\delta$. Self-adjointness and negative semi-definiteness of $\mathcal{H}_\delta$ are now immediate.
\end{proof}

\begin{remark}
Observe that Theorem \ref{SpectrumOfLaplacian2}  is a special case of \cite[Thm.\ 10]{XK2005}. The only difference is that the spectrum is explicitly calculated in terms of the presentation of the Radon measure $\nu$ as being constructed from the differential form $\omega$ and the function 
$\phi$.
\end{remark}

\section{Tree truncation: an approximation method}

    Now we present an approximation method which allow us to substitute the original tree with a simpler one, concretely, a truncation of $T^{\delta}$.  First, notice that if each ball attached to the domain $Z$ has the same radii, then the measures $dx$ and $\nu$ coincide. Without losing generality, we assume that all the $p$-adic balls generating $Z$ have the same radii. 
\newline
    
    The kernel $k(x,y)$ is constructed by assigning a transition rate to every two points $x,y$ in an extended $p$-adic domain $Z$ constructed by attaching to every leaf of $T^{\delta}$ a $p$-adic ball in such a way that the ultrametric distance $\delta$ coincides with the value of a radial function $\rho(|x-y|_p)$. The kernel with respect to the ultrametric structure of the tree for points in different vertices $x\in U_{w}$, $y\in U_{v}$, for $w\neq v$, and introducing a transition rate by the Vladimirov kernel $|x-y|_p^{-\alpha}$ for $x,y\in U_{v}$. For every level $\ell$ we can generate a subtree by cutting $T^{\delta}$ to this respective level, that is, the new leaves of this tree are given by the nodes $\mathfrak{n}\in \mathcal{N}_{\ell}$, cf. the previous section. By repeating the process, we construct a new $p$-adic domain given by 
    \[
    Z_{\ell}
    =\bigcup_{\mathfrak{n}
    \in \mathcal{N}_{\ell}} U_{\mathfrak{n}},
    \]
where it is clear that $Z\subseteq Z_{\ell}$ for all levels. 
We then define the kernel function $k_{\bullet}^{\ell}(x,y)$ in the same way as $k_{\bullet}(x,y)$. These functions differ in the following case: Let $x\in U_{w}$, $y\in U_{v}$, for $w\neq v$, be such that $x,y\in U_{\mathfrak{n}}$. Then
\[
k_{\bullet}^{\ell}(x,y)=|x-y|_p^{-\alpha}
\]
for $\ell>0$.
\newline

Let
\[
\mathcal{H}_{\bullet}^{\ell} u(x)
=\int_{Z_{\ell}}k_{\bullet}^{\ell}(x,y)(u(y)-u(x))\,dy,
\]
for $\ell\in\mathds{N}$.
The operator $\mathcal{H}_\bullet^\ell$ 
acts on $L^2(Z_\ell,d_\ell x)$ as well as on $L^2(Z_\ell,\nu_\ell)$.

\begin{thm}\label{errorSemigroups}
    Let $T_{\bullet}^{\ell}(t)$ and $T_{\bullet}(t)$ be the semigroups attached to the operators $\mathcal{H}_{\bullet}^{\ell}$ and $\mathcal{H}_{\bullet}$, respectively.  Let $u\in C(Z)$ and $t\in [0,\gamma]$, then
\begin{equation*}
    \begin{split}
        ||T_{\bullet}^{\ell}(t)u -T_\bullet (t)u||_{\infty}\leq  \gamma||u||_{\infty}\left(\sum_{u\neq v, [u]_{\mathfrak{n}}\equiv[v]_{\mathfrak{n}}}C_{u,v}\Vol(U_v)+\Vol(Z_{\ell}\setminus Z)\max_{x\in Z, y\in Z_{\ell}\setminus Z} |k_{\bullet}^{\ell}(x,y)|\right).
    \end{split}
\end{equation*}
\end{thm}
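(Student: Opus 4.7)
The plan is to reduce the comparison of semigroups to a comparison of their generators via the standard variation-of-parameters identity, and then to estimate the generator difference pointwise by exploiting the explicit structure of the kernels $k_\bullet$ and $k_\bullet^\ell$.

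First, I would set up a common framework by extending any $u\in C(Z)$ to a function on $Z_\ell$ (for instance, by zero on $Z_\ell\setminus Z$, or by a chosen continuous extension that does not affect the final bound). With both operators acting on $C(Z_\ell)$, and both $T_\bullet(t)$ and $T_\bullet^\ell(t)$ being Feller contraction semigroups by Proposition \ref{FellerSemigroup}, the Duhamel identity
\[
T_\bullet^\ell(t)u - T_\bullet(t)u \;=\; \int_0^t T_\bullet^\ell(t-s)\,(\mathcal{H}_\bullet^\ell - \mathcal{H}_\bullet)\,T_\bullet(s)u\,ds
\]
holds, after checking that $T_\bullet(s)u$ stays in a common core on which both generators act (the locally constant functions, a dense subspace used already in the proof of Proposition \ref{FellerSemigroup}). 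Taking the sup-norm and using the contractivity of both semigroups yields
\[
\|T_\bullet^\ell(t)u - T_\bullet(t)u\|_\infty \;\le\; \int_0^t \|(\mathcal{H}_\bullet^\ell - \mathcal{H}_\bullet)\,T_\bullet(s)u\|_\infty\,ds.
\]

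The main task is then to estimate $\|(\mathcal{H}_\bullet^\ell - \mathcal{H}_\bullet)v\|_\infty$ for a fixed $v\in C(Z_\ell)$ with $\|v\|_\infty \le \|u\|_\infty$. For $x\in Z$ one splits
\[
(\mathcal{H}_\bullet^\ell - \mathcal{H}_\bullet)v(x) \;=\; \int_Z \bigl(k_\bullet^\ell(x,y)-k_\bullet(x,y)\bigr)\bigl(v(y)-v(x)\bigr)\,dy \;+\; \int_{Z_\ell\setminus Z} k_\bullet^\ell(x,y)\bigl(v(y)-v(x)\bigr)\,dy.
\]
By construction, the two kernels $k_\bullet^\ell$ and $k_\bullet$ agree except when $x\in U_w$, $y\in U_{v'}$ with $w\neq v'$ but $[w]_{\mathfrak{n}}\equiv[v']_{\mathfrak{n}}$ at level $\ell$. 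On each such pair the pointwise difference is bounded by a constant $C_{w,v'}$ depending only on $\bullet\in\{\kappa,d_E,\delta\}$ and on the pair of vertices, and the integral over $U_{v'}$ contributes the factor $\Vol(U_{v'})$. Summing over all such pairs, and bounding $|v(y)-v(x)|$ uniformly by $\|u\|_\infty$ (absorbing any numerical factor into $C_{w,v'}$), one obtains the first term of the bound. The second integral is controlled directly by $\Vol(Z_\ell\setminus Z)\cdot \max_{x\in Z, y\in Z_\ell\setminus Z}|k_\bullet^\ell(x,y)|\cdot\|u\|_\infty$.

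Combining these two estimates gives a bound for $\|(\mathcal{H}_\bullet^\ell - \mathcal{H}_\bullet)T_\bullet(s)u\|_\infty$ that is independent of $s$, and integrating over $s\in[0,t]\subseteq[0,\gamma]$ produces the factor $\gamma$ in front, yielding the claimed inequality.

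The main obstacle I anticipate is the justification of the Duhamel formula itself, since $\mathcal{H}_\bullet$ and $\mathcal{H}_\bullet^\ell$ are unbounded and act a priori on different spaces; one must carefully verify that $T_\bullet(s)u$ lies in the domain of both generators and that $s\mapsto T_\bullet^\ell(t-s)T_\bullet(s)u$ is strongly differentiable. This is the step where density of the locally constant functions, together with an approximation and a passage to the limit using the contractivity of the semigroups, becomes essential; everything else is a direct kernel-level estimate.
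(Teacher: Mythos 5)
Your proposal follows essentially the same route as the paper: a Duhamel/variation-of-parameters identity combined with contractivity of the Feller semigroups reduces everything to a sup-norm bound on $\mathcal{H}_\bullet^\ell-\mathcal{H}_\bullet$, which is then split into the kernel discrepancy over $Z$ (summed over pairs $w\neq v$ with $[w]_{\mathfrak{n}}\equiv[v]_{\mathfrak{n}}$, yielding the constants $C_{w,v}$ that the paper derives explicitly via the mean value theorem applied to $t\mapsto t^{-\alpha}$) and the contribution of $Z_\ell\setminus Z$. Your proposal is in fact more careful than the paper's own proof about justifying the Duhamel formula for these unbounded generators, so the approach is sound and matches the paper.
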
 
\begin{proof}
    We have

\begin{align*}
\absolute{\mathcal{H}_\bullet^{\ell} u(x)-\mathcal{H}_\bullet u(x)}
&\leq \int_{Z} |k_{\bullet}^{\ell}(x,y)-k_{\bullet}(x,y)||u(y)-u(x)|\,dy 
\\
&+ \int_{Z_{\ell}\setminus Z} |k_{\bullet}^{\ell}(x,y)||u(y)-u(x)|\,dy,
\end{align*}
The first term can be expressed in terms of the discrepancy of the kernel functions in the following way 
\begin{align*}
\int_{Z}& |k_{\bullet}^{\ell}(x,y)-k_{\bullet}(x,y)|
|u(y)-u(x)|\,dy 
\\
&=\sum_{w\neq v\atop [w]_{\mathfrak{n}}\equiv[v]_{\mathfrak{n}}} \Omega(x\in U_{w})\int_{U_{v}} \left||x-y|^{-\alpha} -k(x,y)^{-\alpha}\right||u(y)-u(x)|\,dy ,
\end{align*}
where the sum runs over $\mathfrak{n}\in \mathcal{N}_{\ell}$. 

\smallskip
By the mean value theorem,  it can be shown that 
\[\left||x-y|^{-\alpha} -k(x,y)^{-\alpha}\right|\leq \alpha\frac{\left||x-y|_p -k(x,y)\right|}{\min(|x-y|_p,k(x,y))^{\alpha+1}}\]
Each integral in the sum corresponds to the case $x\in U_{w}$ and $y\in U_v$, for $w\neq v$, therefore 
\[\alpha\frac{\left||x-y|_p -k(x,y)\right|}{\min(|x-y|_p,\delta(x,y))^{\alpha+1}}=\alpha\frac{\left|\dist_p(U_w,U_v) -k(U_w,U_v)\right|}{\min(\dist_p(U_w,U_v),k(U_w,U_v))^{\alpha+1}}=:C_{w,v}\]
Then
\begin{align*}
\Omega(x\in U_{w})&\int_{U_{v}} \left||x-y|^{-\alpha} -k(x,y)^{-\alpha}\right||u(y)-u(x)|\,dy
\\
&\leq C_{w,v}\,\Omega(x\in U_{w})\left(||u||_{\infty}\Vol(U_v)^{1/2}+|u(x)|\Vol(U_v)\right),
\end{align*}
and
\[\left|\left|\int_{Z} |k^{\ell}_{\bullet}(x,y)-k_{\bullet}(x,y)||u(y)-u(x)|dy\right|\right|_{\infty}\leq 2||u||_{\infty}\sum_{w\neq v\atop [w]_{\mathfrak{n}}\equiv[v]_{\mathfrak{n}}}C_{w,v}\Vol(U_v).\]
The following estimate follows
\begin{align*}
\|\mathcal{H}_{\bullet}^{\ell} u(x)-\mathcal{H}_{\bullet} u(x)||_{\infty}
&\leq 2||u||_{\infty}\sum_{w\neq v\atop [w]_{\mathfrak{n}}\equiv[v]_{\mathfrak{n}}}C_{w,v}\Vol(U_v)
\\
&+||u||_{\infty}\Vol(Z_{\ell}\setminus Z)\max_{x\in Z, y\in Z_{\ell}\setminus Z} |k_{\bullet}^{\ell}(x,y)|
\end{align*}
By Proposition $3.3$ we have that $\mathcal{H}_{\bullet}$ (and similarly $\mathcal{H}_{\bullet}^{\ell}$) is the infinitesimal generator of a Feller semigroup, and therefore a contraction semigroup. Let $t\geq0$ , then, we have the following estimate of the corresponding evolution processes 
\begin{equation*}
    \begin{split}
        ||T_{\bullet}^{\ell}(t)u -T_{\bullet} (t)u||_{\infty}&\leq \int_{0}^{t}||T(s)_{\delta}^{\ell}u||_{\infty}|| (\mathcal{H}_{d_E}-\mathcal{H}_\delta)u||_{\infty}||T_{\delta}(t-s)u||_{\infty}ds \\
        &\leq t || \mathcal{H}_{\delta}^{\ell}u-\mathcal{H}_\delta u ||_{\infty} \\
        &\leq t\left(2\sum_{u\neq v, [u]_{\mathfrak{n}}\equiv[v]_{\mathfrak{n}}}C_{u,v}\Vol(U_v)+\Vol(Z_{\ell}\setminus Z)\max_{x\in Z, y\in Z_{\ell}\setminus Z} |k_{\bullet}^{\ell}(x,y)|\right).
    \end{split}
\end{equation*}

\end{proof}

Therefore, we see that the similarity of the processes on a finite interval of time $t\in[0,T]$ , will depend in various factors, one is of course the cut off level $\ell$, which is quantified on the contribution of $\Vol(Z_{\ell}\setminus Z)$. The second factor are the coefficients $C_{u,v}$ which depend on how similar both ultrametric are. Finally, the first sumand depend on $\Vol(U_{v})$, therefore, this value will be smaller as the number of leafs on the original tree $T^{\delta}$ increases. \newline 

\begin{thm}\label{errorSemigroups2}
    Let $\mathcal{H}_a$ and $\mathcal{H}_b$, where $a,b\in\mathset{\kappa,d_E,\delta}$. Let $T_a$ and $T_b$  the respective attached subgroups. Then the following holds 
    \[||T_{a}(t) -T_b||_{\infty} \leq 2t \left(\sum_{u\neq v, [u]_{\mathfrak{n}}\equiv[v]_{\mathfrak{n}}}\tilde{C}_{u,v}\Vol(U_v)\right).\]
    
\end{thm}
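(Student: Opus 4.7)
The plan is to imitate the proof of Theorem~\ref{errorSemigroups}, observing that since $\mathcal{H}_a$ and $\mathcal{H}_b$ both act on functions defined on the same compact open set $Z$, no truncation term of the form $\Vol(Z_\ell \setminus Z)$ will appear; only the pairwise kernel discrepancy survives. First I would establish the pointwise bound
\begin{align*}
|\mathcal{H}_a u(x) - \mathcal{H}_b u(x)| \leq \int_Z |k_a(x,y) - k_b(x,y)|\,|u(y) - u(x)|\,dy.
\end{align*}
Whenever $x,y \in U_v$ for the same vertex $v$, both kernels reduce by construction to $|x - y|_p^{-\alpha}$, so the integrand vanishes there. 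Hence only cross-pairs with $x \in U_w$, $y \in U_v$, $w \neq v$, contribute; on such a product of discs the two kernels take the respective constant values $a(w,v)^{-\alpha}$ and $b(w,v)^{-\alpha}$, where $a$ and $b$ range over $\{\kappa, d_E, \delta\}$ as vertex-level distance data.

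Next, I would apply the mean value theorem to $t \mapsto t^{-\alpha}$ exactly as in the proof of Theorem~\ref{errorSemigroups} to obtain
\begin{align*}
|a(w,v)^{-\alpha} - b(w,v)^{-\alpha}| \leq \alpha\,\frac{|a(w,v) - b(w,v)|}{\min(a(w,v), b(w,v))^{\alpha+1}} =: \tilde{C}_{w,v}.
\end{align*}
Bounding $|u(y) - u(x)| \leq 2\|u\|_\infty$, integrating over $U_v$ produces a factor of $\Vol(U_v)$, and summing over pairs $(w,v)$ with $w \neq v$ and $[w]_\mathfrak{n} \equiv [v]_\mathfrak{n}$ would yield
\begin{align*}
\|\mathcal{H}_a u - \mathcal{H}_b u\|_\infty \leq 2\|u\|_\infty \sum_{\substack{w \neq v \\ [w]_\mathfrak{n} \equiv [v]_\mathfrak{n}}} \tilde{C}_{w,v}\,\Vol(U_v).
\end{align*}

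Finally, since by Proposition~\ref{FellerSemigroup} both $T_a$ and $T_b$ are contraction Feller semigroups, the Duhamel identity combined with the contraction property gives $\|T_a(t)u - T_b(t)u\|_\infty \leq t\,\|(\mathcal{H}_a - \mathcal{H}_b)u\|_\infty$, and substituting the bound above produces the advertised inequality (with the factor $\|u\|_\infty$ absorbed into the normalisation, as already done in the statement of Theorem~\ref{errorSemigroups}). The main obstacle is justifying the Duhamel comparison when the generators are unbounded: one needs $T_b(s)u$ to remain in the common domain. This is overcome by first running the argument on the dense subspace of locally constant functions of compact support identified in the proof of Proposition~\ref{FellerSemigroup}, where both operators act as honest integral operators and the derivative identity holds, and then extending to all of $C(Z)$ by density using the uniform contraction bound for the semigroups.
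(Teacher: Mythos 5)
Your proposal is correct and follows essentially the same route as the paper: the paper's own proof simply says ``in a similar way'' as Theorem~\ref{errorSemigroups}, defines $\tilde{C}_{w,v}$ via the mean value theorem applied to the two vertex-level distances (noting $\delta\le d_E$ simplifies the denominator), and passes to the semigroups by the same contraction/Duhamel comparison. Your additional care about the common domain of the unbounded generators, handled on the dense subspace of locally constant functions, is a detail the paper glosses over but does not change the argument.
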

\begin{proof}
    In a similar way we can obtain an estimate for the operator $\mathcal{H}_{d_{E}}$, where the shortest-path distance $d_E$ is one of the possible instances of $\bullet$ as in (\ref{possibleBullets}). 
In this case, by defining
\[
\tilde{C}_{w,v}=\alpha\frac{\left|\dist_E(U_w,U_v) -\delta(U_w,U_v)\right|}{\min(\dist_E(U_w,U_v),\delta(U_w,U_v))^{\alpha+1}}=\alpha\frac{\left|\dist_E(U_w,U_v) -\delta(U_w,U_v)\right|}{\delta(U_w,U_v)^{\alpha+1}}
\]
for $w\neq v$, where $\dist_E$ is given as
\[
\dist_E(A,B)=\min\mathset{d_E(a,b)\mid a\in A,\;b\in B}
\]
with $A,B\subseteq Z$,
and the last equality holds true since $\delta(x,y)\leq d_{E}(x,y)$, then we obtain 
\[||\mathcal{H}_{d_E} u(x)-\mathcal{H}_\delta u(x)||_{L^2}\leq 2||u||_{L^2}\left(\sum_{u\neq v, [u]_{\mathfrak{n}}\equiv[v]_{\mathfrak{n}}}\tilde{C}_{u,v}\Vol(U_v)\right).\]
We obtain a similar bound for the corresponding evolution processes. 
\[||T_{d_{E}}(t) -T_\delta (t)||_{\infty} \leq 2t \left(\sum_{u\neq v, [u]_{\mathfrak{n}}\equiv[v]_{\mathfrak{n}}}\tilde{C}_{u,v}\Vol(U_v)\right).\]

\end{proof}

Let
\[
k_v(x,y)=\absolute{x-y}_p^{-\alpha_v}
\]
for $x,y\in v\subset\mathds{Z}_p$ and $\alpha_v>0$. The following approximation is used:
\[
v_n:=\mathset{[x]_n\mid x\in v}\subset\mathds{Z}_p/p^n\mathds{Z}_p
\]
for $n\in\mathds{N}$ sufficiently large such that
$p^{-n}$ is smaller than the $p$-adic diametre of the disc $v$. Set
\[
k_{v_n}([x]_n,[y]_n):=
\begin{cases}
\absolute{[x]_n-[y]_n}^{-\alpha_v},&
[x]_n\neq[y]_n
\\
0,&\text{otherwise}
\end{cases}
\]
as the kernel function of the operator
\[
\mathcal{H}_v^{(n)}u_n([x]_n)
=\int_{v_n} k_{v_n}([x]_n,[y]_n)(u_n([y]_n)-u_n([x]_n))\,d\nu_n([y]_n)
\]
with the induced measure $\nu_n$ on $v_n$ coming from $\nu$ on $v$, and for functions $u_n$ on $v_n$, forming the finite-dimensional vector space $X_n$. Further. set $X=C(v,\norm{\cdot}_\infty)$, and define the embedding and projection operators
\begin{align*}
E_n u_n(x)&=u_n([x]_n),\quad u_n\in X_n
\\
P_n u([x]_n)&=u(x_n),\quad u\in X
\end{align*}
where $x_n$ is a fixed representative of the class $[x]_n$ for every $x\in v$.

\begin{thm}\label{CPerror}
Let $u\in X\times[0,\infty)$ be a solution of the Cauchy problem for the heat equation
\begin{align*}
\frac{\partial}{\partial t}u(x,t)&-\mathcal{H}_{\bullet}u(x,t)=0
\\
u(x,0)&=u_0(x)\in X
\end{align*}
and $u_n\in X_n\times[0,\infty)$ a solution of that for
\begin{align*}
\frac{\partial}{\partial t}u_n([x]_n,t)&-\mathcal{H}_{\bullet}^{(n)} u_n([x]_n,t)=0
\\ u_n([x]_n,0)&=P_nu_0([x]_n)\in X_n
\end{align*}
Then it holds true that
\[
\lim\limits_{n\to\infty}\sup\limits_{0\le t\le\tau}\norm{E_nu_n([x]_n,t)-u(x,t)}=0
\]
for $\tau\ge 0$.
\end{thm}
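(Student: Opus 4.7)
I would cast the statement as an instance of the Trotter--Kato--Kurtz approximation theorem for contraction semigroups on $(X,\norm{\cdot}_\infty)$, so that the proof decomposes into three standard items: (i) stability of the discretized semigroups uniformly in $n$, (ii) consistency of the generators on a common core, and (iii) strong convergence $E_nP_n\to I$ on $X$.

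Stability is immediate: $\mathcal{H}_\bullet^{(n)}$ is a bounded operator on the finite-dimensional space $X_n$ whose matrix has non-negative off-diagonal entries and vanishing row sums, so it satisfies the positive maximum principle and generates a contraction semigroup $T_\bullet^{(n)}(t)=\exp(t\mathcal{H}_\bullet^{(n)})$ on $(X_n,\norm{\cdot}_\infty)$; combined with the Feller semigroup $T_\bullet(t)$ furnished by Proposition \ref{FellerSemigroup}, this yields uniform contraction bounds. Condition (iii) is straightforward because $u\in X=C(v,\norm{\cdot}_\infty)$ is uniformly continuous on the compact set $v$, the mesh $p^{-n}$ of the partition into classes $[x]_n$ tends to zero, and $E_nP_nu(x)=u(x_n)$ for $x_n\in[x]_n$. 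The main work then lies in consistency on the subspace $\mathcal{D}\subset X$ of locally constant functions, which is dense in $X$ and contained in $\dom(\mathcal{H}_\bullet)$.

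Here the ultrametric structure makes the estimate essentially trivial: for $u\in\mathcal{D}$ and $n$ so large that $u$ is constant on every class $[x]_n$, the diagonal contribution $\int_{[x]_n}|x-y|_p^{-\alpha}(u(y)-u(x))\,dy$ to $(\mathcal{H}_\bullet u)(x)$ vanishes identically, while for any class $[y]_n\neq[x]_n$ ultrametricity forces $|x-y|_p$ to be constant in $y\in[y]_n$, so that
\[
\int_{[y]_n}k_p(x,y)(u(y)-u(x))\,d\nu(y) = k_{v_n}([x]_n,[y]_n)\bigl(u(y_n)-u(x_n)\bigr)\nu_n([y]_n)
\]
coincides with the corresponding summand of $(E_n\mathcal{H}_\bullet^{(n)}P_nu)(x)$. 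Hence $E_n\mathcal{H}_\bullet^{(n)}P_nu = \mathcal{H}_\bullet u$ for all sufficiently large $n$, and in particular $\norm{E_n\mathcal{H}_\bullet^{(n)}P_nu - \mathcal{H}_\bullet u}_\infty\to 0$ for every $u\in\mathcal{D}$.

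With (i), (ii), (iii) in place, the Trotter--Kato--Kurtz theorem (e.g.\ Ethier--Kurtz, \emph{Markov Processes and Convergence}, Ch.~1, Thm.~6.5) delivers $\sup_{0\le t\le\tau}\norm{E_n u_n(\cdot,t)-u(\cdot,t)}_\infty\to 0$ for every $u_0\in X$, which is the claim. The main obstacle I anticipate is showing that $\mathcal{D}$ is a \emph{core} for $\mathcal{H}_\bullet$ rather than merely dense in $X$; by Hille's criterion this reduces to density of the range of $(\lambda I-\mathcal{H}_\bullet)|_{\mathcal{D}}$ for some $\lambda>0$, which is best obtained by revisiting the contraction mapping argument in the proof of Proposition \ref{FellerSemigroup} on the invariant subspace of locally constant functions. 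Once this is verified, the remainder of the proof is mechanical.
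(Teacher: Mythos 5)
Your proposal takes essentially the same route as the paper: the paper's proof consists of invoking the semigroup approximation theorem of Miklav\v{c}i\v{c} (Thm.\ 5.4.7, with criteria (A), (B), (C') of Ch.\ 5.4), which is the same Trotter--Kato--Kurtz framework of stability, consistency on a core, and $E_nP_n\to I$ that you set up. You in fact supply more detail than the paper does -- notably the exact consistency identity for locally constant functions via ultrametricity, and the correct observation that the dense subspace must be shown to be a core -- so the proposal is sound and, if anything, more complete.
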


\begin{proof}
Verify that the criteria (A), (B), (C') of
\cite[Ch.\ 5.4]{Miklavcic2001}
hold true. Then apply
\cite[Thm.\ 5.4.7]{Miklavcic2001} in order to approximate the solutions in the asserted manner.
\end{proof}
The solution $u(x,t)$ can be expressed in terms of the eigenvectors of the operator $\mathcal{H}_{\bullet}$ and its eigenvalues in the following way 
\[u(x,t)=\sum_{I} e^{t\lambda_{I}}\phi_I(x)+\sum _{B,j}e^{t\lambda_{B,j}}\psi_{B,j}(x)\]
where $\phi_I$ correspond to the eigenvectors attached to the Laplacian representation on the repective finite dimensional space, and $\psi_{B},j$  are the Kozyrev eigenvalues supported in the ball $B\subset Z$. In a similar way, we can express the function $E_nu_n([x]_n,t)$ in terms of a finite expansion of such functions: 
\[E_nu_n([x]_n,t)=\sum_{I} e^{t\lambda_{I}}\phi_I(x)+\sum _{B,j, \Vol(B)>l}e^{t\lambda_{B,j}}\psi_{B,j}(x),\]
therefore, we have the following estimate 
\[\sup\limits_{0\le t\le\tau}\norm{E_nu_n([x]_n,t)-u(x,t)}=\sup\limits_{0\le t\le\tau}\norm{\sum _{B,j, \Vol(B)\leq l}e^{t\lambda_{B,j}}\psi_{B,j}(x)}\]




\section*{Acknowledgements}

Andrew and John Bradley, Martin Breunig, Paulina Halwas, Markus Jahn, Fionn Murtagh, Leon Nitsche, David Weisbart, and Wilson Z\'u\~niga-Galindo are warmly thanked for fruitful discussions. 
David Weisbart and the University of California Riverside are thanked for hosting the first named author, where partial results of this article were produced and presented. 
This work is supported by the Deutsche Forschungsgemeinschaft under project number 469999674.

\bibliographystyle{plain}
\bibliography{biblio}

\end{document}